\definecolor{darkgreen}{rgb}{0,0.5,0}
\definecolor{darkblue}{rgb}{0,0,0.7}
\definecolor{darkred}{rgb}{0.9,0.1,0.1}
\newtheorem{theorem}{Theorem}
\newtheorem{proposition}[theorem]{Proposition}
\newtheorem{lemma}[theorem]{Lemma}
\theoremstyle{definition}
\newtheorem{remark}[theorem]{Remark}
\newtheorem{definition}[theorem]{Definition}
\newcommand{\cref}[1]{Corollary~\ref{c.#1}}
\numberwithin{equation}{section}
\numberwithin{theorem}{section}
\newcommand{\N}{\mathbb{N}}
\newcommand{\R}{\mathbb{R}}
\newcommand{\C}{\mathbb{C}}
\renewcommand{\subset}{\subseteq}
\newcommand{\test}[1][]{%
\ifthenelse{\equal{#1}{}}{omitted}{given}%
}
\newcommand{\pa}{\partial}
\newcommand{\EBE}{\operatorname{EBE}}
\renewcommand{\bar}{\overline}
\renewcommand{\part}{\partial}
\begin{document}

\title{The Extended Bogomolny Equations on $\mathbb {R}^2 \times \mathbb{R}^+$ with nilpotent Higgs field}

\begin{abstract}
We study the extended Bogomolny equations with gauge group $SU(2)$ on $\R^2\times\R^+$ with generalized Nahm pole boundary conditions and nilpotent Higgs 
field. We completely classify solutions by relating them to certain holomorphic data through a Kobayashi-Hitchin correspondence.
\end{abstract}

\author[P. Dimakis]{Panagiotis Dimakis}
\address[Panagiotis Dimakis]{D\'epartement de Math\'ematiques, Universit\'e du Qu\'ebec \`a Montr\'eal}
\email{pdimakis12345@gmail.com}
\keywords{}
\subjclass[2010]{}
\date{\today}

\maketitle
\section{Introduction}

Since their conception in \cite{KW}, the Kapustin-Witten (KW) family of equations has become the subject of a number of conjectures, most prominently relating to the 
theory of knots and the Jones polynomial. The family of equations is defined on a four manifold $M^4$ equipped with an $SU(2)$-bundle $E$. 
In terms of a complex connection $\mathcal A := A + i\Phi$ on the bundle $E$, these equations can be succinctly written as 
\begin{equation*}
e^{i\theta}F_{\mathcal A} = \bar{\star e^{i\theta}F_{\mathcal A}},
\end{equation*}
where $F_{\mathcal A}:= d\mathcal A + \mathcal A \wedge\mathcal A$ is the curvature of the complex connection. Restricting to the special angle $\theta = \pi/4$, 
we recover the original Kapustin Witten equations as described in \cite{KW}. This particular case will be the subject of this paper, and we shall simply call these
{\it the} KW equations.

We state the core conjecture in this area. The goal is to count the number of solutions to the KW equations on $\R^3\times\R^+$ which satisfy a certain singular boundary
condition with respect to a knot $K$ embedded in $\R^3 \times \{0\}$. More specifically, we seek solutions which are asymptotic to certain model solutions along
$\R^3\times \{0\}$: away from $K$, the model solution is the charge zero singular monopole solution, depends only on the vertical variable and exhibits a `Nahm 
pole' at the boundary (up to a gauge transformation); near $K$, the model solution is a charge $1$ singular monopole solution, again up to a gauge transformation,
and will be recalled later. This asymptotic behavior, as well as the asymptotic behavior near infinity, make it possible to define the second relative Chern number of $E$,
and for each such Chern number $m$, the conjecture states that the number of solutions to the KW equations with these boundary conditions 
equals the $m^{\mathrm{th}}$ coefficient of the Jones polynomial. 

While the Fredholm theory for this set up has been successfully carried out in \cite{MW2}, thus making the counting problem a reasonable one, there 
is currently no way to produce solutions for any $K$, except in the simplest setting where $K$ is a straight line in $\R^3$. In \cite{GW} the authors suggested 
an adiabatic approach to this problem.  One key step is to dimensionally reduce as follows. Take a slice $\R^2\times\R^+ \subset \R^3\times\R+$ transverse to $K$ 
and stretch $\R^3 \times \R^+$ in directions perpendicular to the slice. In the limit, the knot reduces to a set of parallel lines $\{p_j\} \times \R$, where 
$\mathcal P:= \{p_1,...,p_n\}$ is the set of points where the knot intersects the slice.  The fields then satisfy Nahm pole boundary conditions at the boundary
of the slice away from $\mathcal P$ and are asymptotic to the model knot solutions around each $p_j$. These dimensionally reduced equations on $\R^2 \times \R^+$
are known as the extended Bogomolny equations (EB).   Sufficient knowledge about the moduli space of solutions to these should lead to a correspondence 
between paths in this moduli space and solutions of the KW equations themselves, using a family of parallel slices as above. (Note that we are sidestepping 
the potentially difficult analysis needed to understand slices which are not transverse to $K$)

The aim of this paper is the study solutions to the dimensionally reduced KW equations on $\R^2\times \R^+$ with generalized Nahm pole 
boundary asymptotics along the boundary $\R^2\times\{0\}$.  We focus on the case where the Higgs field $\Phi$ is nilpotent. This extends \cite{MH1, MH2}
and also extends and simplifies \cite{TaubesDC}. We highlight the role of the holomorphic data in determining the positions and charges of the knot points. 
Along the way we introduce and develop the necessary techniques needed here, as well as in a subsequent paper \cite{DM1} which analyzes the case where 
\begin{equation*}
\varphi = \begin{pmatrix} a & 0 \\ 0 & -a \end{pmatrix}
\end{equation*}
and uses such solutions to construct solutions to the KW equations on $\R^3\times\R^+$ with $2n$ braiding strands. 

While not the most general semisimple Higgs field, this is sufficient in order to obtain solutions which have both positive and negative magnetic charge with 
net magnetic charge zero. This charge cancellation appears to be essential in our construction of solutions to the KW equations in \cite{DM1}. The nilpotent case does not allow such solutions. We hope to describe the partial compactifiction of the moduli space of solutions 
corresponding to points of opposite magnetic charge being annihilated and explain how the study of this space may lead to knot invariants.

The paper \cite{MH2} constructs solutions to the extended Bogomolny equations for arbitrary holomorphic Higgs field on $\Sigma \times \R^+$, where
$\Sigma$ is a compact Riemann genus $g\geq 2$. A Kobayashi-Hitchin correspondence is established there using a Hermitian geometric 
approach. We follow this paper closely, but generalize it by replacing the cross-section $\Sigma$ with $\R^2$.  When this cross-section is compact,
there is a clear choice of how to prescribe the limits of solutions as the $\R^+$ variable tends to infinity. This is less clear when the cross-section
is noncompact, and this is the first main obstacle to overcome. In addition, the noncompactness of $\R^2$ requires more delicate a priori estimates. 

Writing the Higgs field as 
\begin{equation*}
\Phi = \Phi_1\,dx_1 + \Phi_2\,dx_2 + \Phi_3\,dx_3 +\Phi_4\,dx_4,
\end{equation*}
it turns out that the extended Bogomolny equations force $\varphi_z := \Phi_2 - i \Phi_3$ to be holomorphic.  It is natural to consider only solutions of polynomial
growth, hence the components of $\varphi_z$ must be polynomials.   We shall further restrict attention here to the case where $\varphi_z$ is a nilpotent matrix. Therefore, up to a gauge transformation, we have that 
\begin{equation}\label{simple holomorphic field form}
\varphi_z = \begin{pmatrix} 0 & P(z) \\ 0 & 0 \end{pmatrix},
\end{equation}
for some polynomial $P(z)$. 

The existence of solutions with nilpotent Higgs fields and a description of their moduli as a set was conjectured explicitly in Section 2.3 of \cite{GW}. In \cite{TaubesDC}, the author 
proved the existence of such solutions in the case where there is exactly one positively charged point at the boundary $\R^2\times\{0\}$, and named these 
KW instantons. Honoring this terminology, we will call the solutions produced in this paper KW multi-instantons.  The total charge of a multi-instanton is the
sum of charges of each of the individual charges of points in $\R^2$. Our method of proof is different from that of Taubes. We use the Hermitian-Einstein formalism 
as in \cite{MH2}, which makes the proof shorter, easier to generalize and perhaps conceptually simpler. 

We now define the two sets that we eventually prove are in bijective correspondence. 

\begin{definition}
Denote by $\EBE(N, K)$ the set of solutions to the extended Bogomolny equations on $\R^2\times \R^+$ with generalized Nahm pole boundary asymptotics
and with the total charge $K$, and such that as $\rho:= \sqrt{|z|^2 + y^2} \to \infty$, the solution is asymptotic to the model solution of charge $N$, where 
$K-N \in \{0, 2, 4, \ldots \}$. 
\end{definition}

For the next definition, we note that as explained carefully in Section \ref{holomorphic line sub-bundle}, following \cite{GW}, the asymptotic conditions at the 
boundary determine a holomorphic line subbundle $\mathcal L$ of $E$. The multi-instanton behavior of solutions is closely related to the interaction between 
$\mathcal L$ and $\varphi_z(\mathcal L)$.  
\begin{definition}
Consider the set of Hermitian-Einstein data $\{(\mathcal E, P(z), \mathcal L, K)\}$, where each four-tuple is consists of: the trivial holomorphic rank $2$ bundle on $\R^2$, 
a polynomial $P(z)$ of degree $N$ corresponding to \eqref{simple holomorphic field form}, and a holomorphic line bundle $\mathcal L$ given by a section of the form 
\begin{equation*}
s = \begin{pmatrix} Q(z) \\ R(z) \end{pmatrix}
\end{equation*}
where $Q$ and $R$ are relatively prime polynomials such that $\text{deg}(Q) \le \text{deg}(R) - 1$ and $\text{deg}(R) = (K-N)/2$.
\end{definition}
\begin{theorem}\label{main theorem}
There is a bijective correspondence between $\EBE(N,K)$ and $\{(\mathcal E, P(z), \mathcal L, K)\}$. 
\end{theorem}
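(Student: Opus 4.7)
The plan is to establish the bijection by constructing a forward map $\EBE(N,K) \to \{(\mathcal{E}, P(z), \mathcal{L}, K)\}$ and a backward map $\{(\mathcal{E}, P(z), \mathcal{L}, K)\} \to \EBE(N,K)$, and then verifying that they are mutually inverse. This is the standard Kobayashi-Hitchin strategy implemented in the compact cross-section case in \cite{MH2}, and essentially all of the work lies in adapting it to the non-compact cross-section $\R^2$.

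\emph{Forward map (solutions to data).} Starting from $(A, \Phi) \in \EBE(N,K)$, the operator $A^{0,1}$ endows $E$ with a holomorphic structure, and combined with the polynomial growth hypothesis this makes $E$ holomorphically trivial over $\R^2$. The EB equations force $\varphi_z := \Phi_2 - i\Phi_3$ to be holomorphic, and the nilpotency assumption together with polynomial growth reduces it, after a complex gauge transformation, to \eqref{simple holomorphic field form} with $P$ a polynomial whose degree is pinned down by the asymptotic charge $N$. The generalized Nahm pole behavior along $y=0$ selects a holomorphic line subbundle $\mathcal{L} \subset \mathcal{E}$, as described in Section~\ref{holomorphic line sub-bundle}; local analysis near each boundary charge point shows that $\mathcal{L}$ is generated by a section $(Q, R)^{\T}$ with $Q, R$ coprime (coprimality being exactly what ensures $\mathcal{L}$ is a subbundle rather than merely a subsheaf), and a charge count yields $\deg R = (K-N)/2$ together with $\deg Q \le \deg R - 1$.

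\emph{Backward map (data to solutions).} Given $(\mathcal{E}, P, \mathcal{L}, K)$, I would seek a Hermitian metric $H$ on $\mathcal{E}$ whose Chern connection, combined with $P$ and $\mathcal{L}$, satisfies the Hermitian-Einstein reformulation of the EB equations. Adapting the ansatz of \cite{MH2}, diagonalize $H$ with respect to the orthogonal splitting determined by $\mathcal{L}$ and its complement, reducing the Hermitian-Einstein equation to a semilinear Toda-type elliptic problem for a scalar function $u$ on $\R^2 \times \R^+$ with prescribed singular behavior: Nahm pole asymptotics on the bulk of $\R^2 \times \{0\}$, knot-type singularities at the boundary charge points encoded by the common zero set of $Q$ and $R$, and the model charge-$N$ profile as $\rho \to \infty$. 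Solve this PDE by the continuity method: construct explicit sub- and super-solutions by patching the known local model solutions, then deform from a solvable starting equation to the target using a priori $C^0$ and $C^{2,\alpha}$ estimates to maintain openness and closedness. Uniqueness of $H$, and hence injectivity of the correspondence, follows from the maximum principle applied to the endomorphism ratio of two candidate metrics; surjectivity is the content of the continuity method itself.

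\emph{Main obstacle.} The principal difficulty, flagged in the introduction, is the non-compactness of the cross-section. On a compact Riemann surface the behavior at spatial infinity is canonically fixed, but on $\R^2$ the correct asymptotic profile is genuinely additional data and must be extracted from the charge $N$; moreover, standard Moser iteration and Poincar\'e-type arguments fail on $\R^2 \times \R^+$. The hard technical point is therefore the global $C^0$ estimate for the Toda-type equation. I would handle it by comparing the candidate solution to the model charge-$N$ solution in a weighted norm, exhausting $\R^2 \times \R^+$ by large half-balls and invoking a uniqueness statement for the model at infinity to close the barrier argument. This is precisely the step where the non-compactness demands genuinely new analysis beyond \cite{MH2}.
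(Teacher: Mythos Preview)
Your forward map is essentially what the paper does in Section~\ref{holomorphic line sub-bundle}, so that part is fine.

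The backward map, however, contains a genuine gap: the reduction to a \emph{scalar} Toda-type equation does not go through for general data $(P,Q,R)$. That reduction is available only in the special case $(P,0,1)$, which is precisely the content of Theorem~\ref{diagonal metric solution} (proved in \cite{MH1}) and is used in the paper as an \emph{input}, not as the general method. When $Q\neq 0$ the line subbundle $\mathcal L$ and the field $\varphi_z$ cannot be simultaneously put into compatible diagonal/upper-triangular form: if you gauge so that $\mathcal L$ is spanned by $(0,1)^\dagger$, then $\varphi_z$ acquires a nonzero lower-left entry (see the computation of $\varphi_z^g$ in Section~3), and the moment map equation for a diagonal $H$ no longer closes up to a scalar PDE. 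Concretely, the paper's approximate metric in the region $\rho>8r_0$ has off-diagonal entries $-e^{-u}\Sigma$ with $\Sigma=\chi(r/\rho)Q/R$, and the resulting error terms \eqref{error terms} are genuinely matrix-valued. So the problem you actually have to solve is the full endomorphism-valued equation $\mathcal M(H_0 e^s)=0$ for $s\in i\mathfrak{su}(E,H_0)$, not a scalar equation for $u$.

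Consequently the method also differs from what you outline. The paper does not use sub/super-solutions or barriers for the nonlinear problem; instead it builds an approximate metric $H_0$ by gluing local gauges (model knot metrics near each charged point, the diagonal solution of Theorem~\ref{diagonal metric solution} at infinity with an explicit off-diagonal correction), then runs a continuity method on the perturbed family $N_t(s)=\mathrm{Ad}(e^{s/2})\mathcal M(H_0 e^s)+ts=0$. The $+ts$ term is what makes the linearization invertible for $t\in(0,1]$ on weighted edge-H\"older spaces, and the a~priori estimates (Lemma~\ref{main lemma} and the proposition following it) control $|s|$ via Green's function bounds for $\Delta-t$ on the half-space rather than by comparison with model solutions. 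Uniqueness is obtained from convexity of a Donaldson-type functional (Section~6), not from a maximum principle on the endomorphism ratio; the latter approach would need a separate argument to handle the boundary and radial-infinity behavior, which the Donaldson functional handles automatically once the indicial-root decay is known.
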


As the authors correctly predict in \cite{GW}, there are some inherent limitations to extracting topological information when the underlying Higgs field is nilpotent. In the 
adiabatic approximation described above, one would like to be able to model actual knots and not just braids. Unfortunately, forcing the field to be nilpotent does not allow 
the existence of points of the same electric and opposite magnetic charge and therefore does not predict annihilation of points, or in the four dimensional picture braids
 closing up to produce a knot. Thankfully, there is a different set up which does predict the existence of such pairs of points which goes by the name of complex symmetry
 breaking. This will be the main theme of the subsequent paper \cite{DM1}. 

The analysis of the EB equations with nilpotent Higgs field requires some new analysis because of the non-compactness of the cross-section $\R^2$. In particular, 
the construction of approximate solutions requires a detailed understanding of the asymptotics of solution near the boundary and at radial infinity. The continuity 
method can then proceed using these as well as techniques from \cite{Melliptic} and better decay estimates in the radial direction. These insights are all 
essential in this paper and its further developments.  This work also illuminates how the knot points, as well as their electric and magnetic charges, are 
reflected in the holomorphic data, and in the interaction of the line subbundle and the field $\varphi$, cf.\ Subsection \ref{holomorphic line sub-bundle}. 

One important question which we pursue elsewhere is to understand what happens when $Q(z)$ and $R(z)$ vary and develop common roots. This monopole 
bubbling phenomenon seems to be of importance in the geometric Langlands program as described in \cite{KW} Section $10.2$. The behavior of solutions 
when the roots of $Q(z)$ and $R(z)$ collide leads naturally to the construction of a natural partial compactification of the moduli space $\EBE(N,K)$. 
We will pursue the analogous question in the setting of complex symmetry breaking since in that case this bubbling corresponds to points of opposite 
magnetic charge annihilating each other. 

Finally, the existence of model knot solutions for the KW family of equations with $\theta\neq \pi/4$ in \cite{PD3} makes it possible to generalize the theorems of this paper to arbitrary $\theta \neq 0,\pi/2$. Some interesting predictions about the set theoretic moduli spaces of solutions to such equations can be found in \cite[Section 3]{GW}.

Here is a guide to this paper. Section $2$ contains the framework and more detailed formulation of the problem; Subsection \ref{model solutions} provides a 
short derivation of Witten's model knot solutions \cite{WFivebranes}. Subsection \ref{holomorphic line sub-bundle} is the conceptual core of the paper,
and it is explained there how the multi-instanton behavior arises from the interaction of the holomorphic line sub-bundle with the nilpotent field $\varphi_z$. 
Approximate solutions to the equations are constructed in Section $3$. This requires identifying a good approximation to the solution as $\rho\to \infty$,
which is one of the main differences between the work here and what was done in \cite{MH2}, where the cross-section is assumed to be compact. 
Section $4$ reviews the necessary linear analysis, as developed in \cite{MH2}. Section $5$ is the technical core of the paper. Writing $H = H_0e^s$ 
where $H_0$ is the approximate metric constructed in section $3$, we consider a one-parameter family of equations $N_t(s) = 0$, where $N_0(s) = 0$ 
is the equation we want to solve, whereas $N_1(s) = 0$ has a trivial solution.  The equations $N_t(s) = 0$ are solved by continuity. The openness part 
requires better decay estimates at radial infinity, see Subsection \ref{estimates}. Closedness follows \cite{MH2} fairly closely. We can thus solve $N_0(s) = 0$. 
Uniqueness is proved in Section $6$ using a Donaldson type functional and the estimates \ref{estimates}. This uniqueness shows finally that Taubes' model 
instanton solutions are the same as the solutions coming from the holomorphic data $(P(z), Q(z), R(z)) = (z^k, a_0 + ... + a_{p-1}z^{p-1}, z^p)$ with $a_0 \neq 0$,
as explained in remark \ref{Taubes' solutions}.

\subsection{Acknowledgements} I would like to thank my advisor Rafe Mazzeo for encouraging me to continue working on this paper after the appearance of \cite{TaubesDC} and for his insightful contributions to this project across countless meetings.

\section{Setting up the problem}

\subsection{The Extended Bogomolny Equations}
We begin by reviewing the definition of the KW equations. These are defined on a four-dimensional manifold $M^4$ equipped with an $SU(2)$-bundle $E$. 
If $A$ is a connection on $E$ and $\Phi$ is an ad$(E)$-valued $1$-form, then these equations take the form
\begin{equation}\label{KW}
\begin{split}
F_A - \Phi\wedge\Phi + \star\,d_A\Phi &= 0\\
\,d_A \star \Phi &= 0.
\end{split}
\end{equation}
As in the introduction, we are interested in the case where $M^4 = \R_{x_1}\times \Sigma_z\times \R_y^+$, where $\Sigma$ is a Riemann surface. We focus
on the case where $\Sigma = \mathbb C$. We use both the holomorphic coordinate $z$ and the Cartesian coordinates $x_2,x_3$, where $z = x_2 + ix_3$.

We wish to dimensionally reduce the equations \eqref{KW} by considering solutions which are invariant in the direction of $x_1$.  We will assume that 
$A_1 = \phi_y = 0$.  In fact, by a gauge choice, it is easy to ensure that $A_1 = 0$. On the other hand, $\phi_y$ cannot be gauged away, but it turns out 
that this vanishing may be deduced from the generalized Nahm pole boundary conditions at $y = 0$ and the asymptotic conditions imposed as $y \to \infty$,
see \cite{HeGl}.

Returning to the reduction, write $\Phi = \phi + \phi_1\,dx_1$ and fix the orientation $\,dx_1\wedge\,dz\wedge\,d\bar z \wedge\,dy$; the dimensionally reduced KW equations then take the form 
\begin{equation}
\begin{split}
F_A - \phi\wedge\phi  &= \star\,d_A\phi_1\\
\,d_A\phi + \star[\phi,\phi_1] &= 0\\
\,d_A^{\star}\phi &= 0,
\end{split}
\end{equation}
where $\star$ is the Hodge star on $\Sigma\times\R_y^+$. 

Although it is not evident in this form, the dimensionally reduced KW equations possess a Hermitian Yang-Mills structure as first observed in \cite{GW} and \cite{WFivebranes}. To see this, define the operators $\Theta:= (\mathcal D_1,\mathcal D_2, \mathcal D_3)$ where
\begin{equation}
\begin{split}
\mathcal D_1 &= (D_2 + iD_3)\,d\bar z = (\pa_{x_2} + i\pa_{x_3} + [A_2+iA_3, \cdot ])\,d\bar z\\
\mathcal D_2 &= [\phi_2 - i\phi_3, \cdot ]\,dz\\
\mathcal D_3 &= D_y - i[\phi_1, \cdot ] = \pa_y + [A_y - i\phi_1, \cdot ]
\end{split}
\end{equation}
and fix a Hermitian metric $H$ on the restriction of the bundle $E$ on $\Sigma\times\R_y^+$. Since we have assumed that the bundle possesses an $SU(2)$ structure, there exists a unique such metric such that the connection and field coefficients are unitary with respect to it. Then the dimensionally reduced KW equations can be written in the form
\begin{equation}\label{EBE}
\begin{split}
[\mathcal D_i,\mathcal D_j] = 0,\, i,j = 1,\,2,\,&3,\\
\frac{i}{2} \Lambda\left( [\mathcal D_1 , \mathcal D_1^{\dagger_H}] + [\mathcal D_2 , \mathcal D_2^{\dagger_H}]\right) &+ [\mathcal D_3 , \mathcal D_3^{\dagger_H}] = 0,
\end{split}
\end{equation}
where $\Lambda:\Omega^{1,1} \to \Omega^0$ is the inner product with the K\"ahler form $\frac{i}{2}\,dz\wedge\,d\bar z$. We call this system of equations the Extended Bogomolny Equations (EBE) in accordance with \cite{MH1, MH2}. The first set of equations in \eqref{EBE} can be thought of as a complex moment map equation and the last equation as the accompanying real moment map equation. We use the following notation for the real moment map:
\begin{equation}\label{unitary moment}
\mathcal M(H) := \frac{i}{2} \Lambda\left( [\mathcal D_1 , \mathcal D_1^{\dagger_H}] + [\mathcal D_2 , \mathcal D_2^{\dagger_H}]\right) + [\mathcal D_3 , \mathcal D_3^{\dagger_H}].
\end{equation}

Let $\mathcal G_{\C}: = SL(2, E)$ the group of special linear automorphisms of the bundle $E$. To understand the Hermitian Yang-Mills structure of the equations \eqref{EBE} we study how gauge transformations $g\in\mathcal G_{\C}$ act on them. 

\begin{lemma}\label{lemma 1}
Assume that the data $(E,\Theta, H)$ satisfy \eqref{EBE}. Then the data $(E,\Theta^g, H^g)$ do as well, where $\Theta^g := (g^{-1}\circ \mathcal D_1 \circ g, g^{-1}\circ \mathcal D_2 \circ g, g^{-1}\circ \mathcal D_3 \circ g)$, and $H^g := g^{\dagger}Hg$.
\end{lemma}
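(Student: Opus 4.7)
The plan is to reduce the lemma to a single algebraic identity governing how Hermitian adjoints transform under conjugation by complex gauge transformations, and then apply this identity term by term to the two pieces of the EBE system.

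The complex moment map equations $[\mathcal{D}_i,\mathcal{D}_j] = 0$ are immediate, since
\begin{equation*}
[g^{-1}\mathcal{D}_i g,\, g^{-1}\mathcal{D}_j g] \,=\, g^{-1}[\mathcal{D}_i,\mathcal{D}_j] g,
\end{equation*}
which vanishes whenever the original commutator does. The real moment map \eqref{unitary moment} requires more care, since it involves the $H$-adjoints $\mathcal{D}_i^{\dagger_H}$, which depend on the metric. The crux of the proof is therefore the compatibility
\begin{equation*}
\bigl(g^{-1}\mathcal{D}_i g\bigr)^{\dagger_{H^g}} \,=\, g^{-1}\mathcal{D}_i^{\dagger_H} g.
\end{equation*}

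I would verify this by unpacking the defining pairing. Since $H^g = g^{\dagger} H g$, the inner products satisfy $\langle u,v\rangle_{H^g} = \langle g u, g v\rangle_{H}$, and hence
\begin{equation*}
\langle (g^{-1}\mathcal{D}_i g)\, u,\, v\rangle_{H^g} \,=\, \langle \mathcal{D}_i(g u),\, g v\rangle_H \,=\, \langle g u,\, \mathcal{D}_i^{\dagger_H}(g v)\rangle_H \,=\, \langle u,\, g^{-1}\mathcal{D}_i^{\dagger_H} g\, v\rangle_{H^g},
\end{equation*}
from which the identity follows by uniqueness of the adjoint. Given this, each commutator $[\mathcal{D}_i,\mathcal{D}_i^{\dagger_H}]$ appearing in $\mathcal{M}(H)$ transforms by the same conjugation, giving $\mathcal{M}(H^g) = g^{-1}\mathcal{M}(H) g$, so the vanishing is preserved.

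The main subtlety lies in the adjoint identity for $\mathcal{D}_1$ and $\mathcal{D}_3$, which are genuine first-order differential operators; the integration-by-parts in the defining pairing produces derivative terms in $H$ that need to be tracked, and one must be careful that $g$ is not assumed unitary (so $g^{\dagger_H}\neq g^{-1}$). Nonetheless the pairing calculation above is agnostic to whether $\mathcal{D}_i$ is algebraic or differential: it only uses the duality between $\mathcal{D}_i$ and $\mathcal{D}_i^{\dagger_H}$ under $\langle\cdot,\cdot\rangle_H$ together with the transformation $\langle u, v\rangle_{H^g} = \langle g u, g v\rangle_H$. The algebraic operator $\mathcal{D}_2$ is handled by the same identity in its trivial form. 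No further ingredient is required, and the lemma follows.
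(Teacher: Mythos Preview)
Your proof is correct and follows essentially the same route as the paper: both arguments observe that the commutator equations are preserved trivially under conjugation, and both establish the key identity $(g^{-1}\mathcal D_i g)^{\dagger_{H^g}} = g^{-1}\mathcal D_i^{\dagger_H} g$ by unpacking the pairing $H^g(u,v) = H(gu,gv)$, from which the real moment map equation transforms by conjugation. The only cosmetic difference is that the paper uses the pointwise Leibniz-type definition $\pa_{\bar z} H(s,s') = H(\mathcal D_1^{\dagger_H} s, s') + H(s, \mathcal D_1 s')$ rather than an $L^2$ adjoint, so your remark about ``integration-by-parts'' is slightly misplaced---no integration is needed, and the identity holds pointwise---but your actual computation goes through verbatim in either framing.
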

\begin{proof}
The operators $\Theta^{g}$ clearly satisfy the first three equations in \eqref{EBE}. To verify that the forth equation holds, we need to compute the adjoints of the modified operators with respect to the modified metric $H^{g}$. It holds that 
\begin{align*}
\pa_{\bar z} H^{g}(s,s') &= H(gs,gs') \\
&= H(\mathcal D_1^{\dagger}gs,gs') + H(gs,\mathcal D_1gs')\\
&= H^{g}(g^{-1}\circ \mathcal D_1^{\dagger} \circ g s, s') + H^{g}(s, g^{-1}\circ \mathcal D_1 \circ g s'),
\end{align*}
and therefore, $\mathcal D_1^{\dagger_{H^{g}}} = g^{-1}\circ \mathcal D_1^{\dagger} \circ g$. Similarly, $\mathcal D_2^{\dagger_{H^{g}}} = g^{-1}\circ \mathcal D_2^{\dagger} \circ g$ and $\mathcal D_3^{\dagger_{H^{g}}} = g^{-1}\circ \mathcal D_3^{\dagger} \circ g$. Plugging these formulas into the last equation we obtain the last equation in  \eqref{EBE} conjugated with $g$, hence the last equation is also satisfied. 

\end{proof}

We have shown that the first three equations in \eqref{EBE} enjoy a larger symmetry, while the last equation is preserved only by those gauge transformations which satisfy $H = g^{\dagger}Hg$, namely the stabilizer subgroup $\mathcal G_{\C}^H$. This structure suggests an alternative approach to constructing solutions to these equations. As stated above, we are given a bundle with a fixed metric $H$ and the unknowns are a unitary triple $(A,\phi,\phi_1)$. One can instead fix a system of operators $\Theta$ which satisfy the first three equations in \eqref{EBE} and then the fourth equation becomes an equation with the metric $H$ as an unknown. 

Let us elaborate on this approach. First we describe the form the operators in $\Theta$ can have. If $f$ is a function on $\Sigma\times\R^+$ and $s$ a section of $E$, then necessarily
\begin{equation}
\begin{split}
\mathcal D_1(fs) &= (\bar \pa f)s\,d\bar z + f\mathcal D_1s\\
\mathcal D_3(fs) &= ( \pa_y f)s\,dy + f\mathcal D_3s\\
\mathcal D_2 &= [\varphi, \cdot], \textrm{ for some } \varphi \in \Omega^{1,0}(\text{Ad}(E));
\end{split}
\end{equation}
and furthermore satisfy the equations $[\mathcal D_i,\mathcal D_j] = 0$. Then, their adjoints with respect to a metric $H$ are given by the formulas 
\begin{align*}
\pa_{\bar z} H(s,s') &= H(\mathcal D_1^{\dagger} s, s') + H(s, \mathcal D_1s') \\
\pa_z H(s,s') &= H(\mathcal D_2^{\dagger} s, s') + H(s, \mathcal D_2s') \\
\pa_y H(s,s') &= H(\mathcal D_3^{\dagger} s, s') + H(s, \mathcal D_3s').
\end{align*}
Here, by definition $H(s,s'):= \bar s^TH s'$, so the inner product is complex linear in the second argument. 

From this point on, let us specialize to the case $\Sigma \cong\, \C$ where there is a particularly nice choice for the system of operators $\Theta$, which is given by $\mathcal D_1 = \bar \pa$ and $\mathcal D_3 = \pa_y$. Since we are working over $\C\times\R^+$, $\varphi = \varphi_z\,dz$ where $\varphi_z$ is a trace free matrix with holomorphic entries. With this choice of operators the last equation in \eqref{EBE} becomes 
\begin{equation}\label{moment map}
M(H) := -\bar\pa(H^{-1}\pa H) - \pa_y(H^{-1}\pa_y H) + [\varphi_z, H^{-1} \varphi_z^{\star}H] = 0,
\end{equation}
where $\varphi_z^{\star}$ is the Hermitian adjoint of $\varphi_z$. Assuming the solution $H$ to \eqref{moment map}, the immediate question is how do we extract the unitary triple $(A,\phi,\phi_1)$. The answer is to express the metric as $H = g^{\dagger}g$, and gauge transform by $g^{-1}$ so that $\Theta = (g\circ \bar \pa \circ g^{-1}, g\circ \varphi \circ g^{-1}, g\circ \pa_y \circ g^{-1})$ and their adjoints are given by $\Theta^{\dagger} = ( (g^{\dagger})^{-1}\circ \pa \circ g^{\dagger}, (g^{\dagger})^{-1}\circ \varphi^{\star} \circ g^{\dagger}, (g^{\dagger})^{-1}\circ \pa_y \circ g^{\dagger})$. From these formulas we see for example that $A_{\bar z} = -(\bar \pa g)g^{-1}$ and $A_z = (g^{\dagger})^{-1}(\pa g^{\dagger})$ so that $A_{\bar z}^{\dagger} = -A_z$, which is the necessary and sufficient condition for $A_2$ and $A_3$ to be unitary matrices. 

\subsection{Boundary and asymptotic conditions}\label{model solutions}

To describe the boundary conditions on \eqref{EBE}, we need certain model solutions. Any solution to \eqref{EBE} should locally look like one these basic solutions as $y\to 0$. We also require that any solution satisfies that the triple $(A,\phi,\phi_1)$ decays uniformly at least as fast as $\rho^{-1}$ as $\rho \to \infty$ where $\rho:= \sqrt{r^2+y^2}= \sqrt{x_2^2 + x_3^2 + y^2}$.

We start with analyzing the field $\varphi = \varphi_z\,dz$. Working in parallel holomorphic gauge, the equations $[\mathcal D_1, \mathcal D_2] = [\mathcal D_3,\mathcal D_2] = 0$ imply that the matrix $\varphi_z$ has holomorphic entries. In this paper we are interested in the case where this matrix is nilpotent. This implies that there exists a holomorphic gauge transformation $g$ that takes this matrix to 
\begin{equation}\label{Jordan canonical}
\varphi_z = \begin{pmatrix} 0 & P(z) \\ 0 & 0 \end{pmatrix},
\end{equation}
where $P(z)$ is holomorphic. We restrict to holomorphic functions that have at most polynomial growth as $r \to \infty$, which by Liouville's theorem must then be polynomials. For the model solutions we let $P(z) = z^k$ for $k\in\N$. 

To solve \eqref{moment map} we also need to choose an ansatz for $H$. The simplest possible ansatz is given by 
\begin{equation}
H_k = \begin{pmatrix} e^{-u_k} & 0 \\ 0 & e^{u_k} \end{pmatrix}.
\end{equation}
Since it is natural to expect that a solution with $P(z) = z^k$ be rotationally symmetric around the axis $z = 0$, we assume that $u_k$ is a function of $r$ and $y$. The unitary triple corresponding to this data is given by 
\begin{equation}\label{unitary triple}
\begin{split}
A_k &= r\pa_r u_k(r,y)\begin{pmatrix} \frac{i}{2} & 0 \\ 0 & -\frac{i}{2} \end{pmatrix}\,d\theta\\
\phi_{z,k} &= e^{-u_k}\varphi_z = e^{-u_k}z^k\begin{pmatrix} 0 & 1 \\ 0 & 0 \end{pmatrix}\\
\phi_{1,k} &= \pa_y u_k(r,y)\begin{pmatrix} \frac{i}{2} & 0 \\ 0 & -\frac{i}{2} \end{pmatrix}.
\end{split}
\end{equation}

Inserting these $\varphi_z$ and $H$ into \eqref{moment map}, we obtain that
\begin{equation}\label{model equation}
\left(\frac{\pa^2}{\pa x_2^2}+\frac{\pa^2}{\pa x_3^2}+\frac{\pa^2}{\pa y^2}\right)u_k + r^{2k}e^{-2u_k} = 0.
\end{equation}
We look for solutions $u_k$ which are regular on $\C\times(0,\infty)_y$ and which blow up in a specific way as $y\to 0$ away from $r = 0$. Let us elaborate on this. Consider the special case where $k = 0$. In this case we consider solutions which only depend on $y$ and hence need to solve the equation
\begin{equation}
u_0'' + e^{-2u_0} = 0.
\end{equation}
Solutions are given by 
\begin{equation}\label{model y}
\begin{split}
u_0(y) &= \text{log}\left(\frac{\sinh(b(y+c))}{b}\right)\text{~when~}b\neq0\\
u_0(y) &= \text{log}(y+c)\text{~when~}b = 0.
\end{split}
\end{equation}
Since we want the solutions which are singular as $y\to 0$, we must set $c = 0$. Also, the condition that $\phi_1\to 0$ as $y\to \infty$, implies $u_0(y) = \text{log}(y)$.

\begin{remark}
The solutions corresponding to $b\neq 0$ force $\phi_1$ to tend to a non-zero traceless matrix as $y\to \infty$. Solutions with this asymptotic condition are called real symmetry breaking solutions. Unless $k = 0$, the existence of model solutions of this form has not been rigorously proven. For interesting predictions and the possible geometric significance of these solutions see section $2.4$ in \cite{GW}. 
\end{remark}

To treat the case $k >0$, we must specify the asymptotic conditions for the model solution. Away from $z = 0$, the field 
\begin{equation}
\varphi_z = \begin{pmatrix} 0 & z^k \\ 0 & 0 \end{pmatrix} 
\end{equation}
is gauge equivalent to the field corresponding to $k = 0$ through the holomorphic gauge transformation 
\begin{equation}
g = \begin{pmatrix} e^{-ik\theta/2} & 0 \\ 0 & e^{ik\theta/2} \end{pmatrix}.
\end{equation}
Thus, as $y\to 0$ away from $z = 0$, we want $u_k$ to be gauge equivalent to $u_0$ and hence require the asymptotic condition that $u_k \sim \text{log}(y)$ as $y\to 0$. We also require that $u_k$ is smooth on $\C\times (0,\infty)$. To find an explicit solution, define 
\begin{equation}
v_k = u_k - (k+1)\text{log}(r). 
\end{equation}
The point of this transformation is to make the differential equation \eqref{model equation} homogeneous of order $-2$. Indeed equation \eqref{model equation} takes the form 
\begin{equation}
\Delta v_k + r^{-2}e^{-2v_k} = 0.
\end{equation}
This equation is scale invariant so it is reasonable to look for solutions which respect this symmetry. We thus consider $v_k$ as a function of $\sigma = y/r$. Equation \eqref{model equation} can be further reduced to 
\begin{equation}
(\sqrt{\sigma^2+1}\,\pa_{\sigma})^2 v_k + e^{-2v_k} = 0.
\end{equation}
Under the change of variables $\sigma = \sinh(\tau)$ the equation becomes
\begin{equation}
v_k'' + e^{-2v_k} = 0.
\end{equation}
Solutions to this equation which are singular along $y = 0$ are given by \eqref{model y} with $y$ being replaced by $\tau$. However, since we require $u_k$ to be regular away from the boundary, $v_k$ must be asymptotic to $-(k+1)\text{log}(r)$ along the positive $z$-axis. This condition is satisfied iff $b = k+1$ in which case the solution takes the form 
\begin{equation}
v_k = \text{log}\left(\frac{\sinh((k+1)\tau)}{k+1}\right).
\end{equation}
Going back to $u_k$ and the variable $\sigma$, we get the model solution 
\begin{equation}
e^{u_k} =\frac{(\sqrt{r^2+y^2} +y)^{k+1} - (\sqrt{r^2 + y^2} - y)^{k+1}}{2(k+1)}.
\end{equation}

We are in a position to describe the boundary conditions for a general solution to \eqref{EBE}. We give definitions both for the unitary triple $(A,\phi,\phi_1)$ and for the metric $H$. In the following, we use the variable $\psi = \tan^{-1}(r/y)$.

\begin{definition}
The unitary triple $(A,\phi,\phi_1)$ satisfies the Nahm pole boundary condition with knot singularity of charge $k$ at $(p,0)\in \C\times \R^+$ if, in some gauge, 
\begin{equation}
(A,\phi_z,\phi_1) = (A_k,\phi_{z,k},\phi_{1,k}) + \mathcal O(\rho^{-1+\epsilon}\psi^{-1+\epsilon})
\end{equation}
for some $\epsilon >0$.

We say that a hermitian metric $H$ satisfies the Nahm pole boundary condition with knot singularity of charge $k$ at $(p,0)\in \C\times \R^+$ if there exists a section $s\in i\mathfrak{su}(E,H_k)$ such that $H = H_ke^s$ and $|s| + |y\,ds| \le C\rho^{\epsilon}\psi^{\epsilon}$ for some $\epsilon>0$.
\end{definition}

\subsection{The holomorphic line sub-bundle}\label{holomorphic line sub-bundle}

Suppose that the triple $(A,\phi_z,\phi_1)$ satisfies the Extended Bogomolny Equations \eqref{EBE} as well as the boundary and asymptotic conditions described above. We will also require that the set of points $(p,0)\in \C\times \R^+$ at which the solution is asymptotic to a positive charge model solution be finite. By definition, near each point on the boundary there exists a homolorphic frame in which the solution looks like one of the model solutions; in particular,
\begin{equation}
\phi_1 = \frac{1}{2y}\begin{pmatrix} i & 0 \\ 0 & -i \end{pmatrix} + \mathcal O(\rho^{-1+\epsilon}\psi^{-1+\epsilon}).
\end{equation}
Let $s_1$ and $s_2$ be local holomorphic coordinates for the bundle $E$ so that $s_1$ corresponds to $(1,0)^{\dagger}$ and $s_2$ to $(0,1)^{\dagger}$. Consider a generic section $s = a_1s_1 + a_2s_2$ of the bundle $E|_{y=1}$ and parallel transport it using the operator $\mathcal D_3$. Then the equation $\mathcal D_3 s = \pa_y s - i\phi_1 s = 0$ implies that as $y\to 0$ 
\begin{equation}
s(y) = (a_1y^{-1/2} + \mathcal O(y^{-1/2+\epsilon}))s_1 + (a_2y^{1/2} + \mathcal O(y^{1/2+\epsilon}))s_2.
\end{equation}
Thus the section $s_2$ is special since its parallel transport vanishes like $y^{1/2}$ as $y\to 0$, whereas the parallel transport of the generic section actually blows up like $y^{-1/2}$. This is called the small section and we obtain an invariant description of the vanishing line bundle 
\begin{equation}
\mathcal L := \{ s\in \Gamma(E) : \mathcal D_3 s = 0, \lim\limits_{y\to 0} |y^{-1/2+a}s| = 0\}
\end{equation}
for all $0<a<1$. Since $\mathcal L$ is locally spanned by the holomorphic section $s_2$ it is holomorphic. Given this holomorphic line bundle and $\varphi_z$, one can easily determine the points of positive charge and the charge of each point. We explain this now. In parallel holomorphic gauge, we have the extra freedom due to the nilpotency of $\varphi_z$ to gauge it into the form \eqref{Jordan canonical} for some polynomial function $P(z)$. However, we cannot at the same time put the small section of the bundle $E$ into the form $(0,1)^{\dagger}$. This "incompatibility" is the sourse of all of the instanton behavior observed by Taubes \cite{TaubesDC}. The small section in general will have the form 
\begin{equation}\label{holomorphic section}
s = \begin{pmatrix} Q(z) \\ R(z) \end{pmatrix},
\end{equation}
where $Q(z)$ and $R(z)$ are relatively prime polynomials such that $\text{deg}(Q) \le \text{deg}(R) -1$. We can choose them to satisfy this relation because $\varphi_z$ is invariant under upper triangular transformations and therefore we can always modify $Q(z)$ by a polynomial multiple of $R$. To determine the points and their multiplicities, consider the quantity 
\begin{equation}
s\wedge \varphi_z s = P(z)R(z)^2,
\end{equation}
where $P(z)$ is the polynomial showing up in $\varphi_z$. The zeroes of this expression are the points of positive charge and the mutliplicities of the zeroes give the respective charges. This also explains the reason why the entries of the small section are polynomials as opposed to more general functions since we only consider solutions with finitely many positively charged points. 

Going back to unitary gauge, the small section transforms as $s^U = gs$ where $g$ is such that the solution metric $H$ is equal to $g^{\dagger}g$. Since the small section $gs$ is required to vanish at least as fast as $y^{1/2}$ as $y\to 0$ in unitary gauge, we can rephrase the boundary condition by requiring that $gs$ has this vanishing property for $s$ as in \eqref{holomorphic section}.

What we have shown above is that given a solution to the Extended Bogomolny Equations \eqref{EBE} which satisfies the appropriate boundary and asymptotic conditions, we can extract a field $\varphi_z$ of the form \eqref{Jordan canonical} and a holomorphic section $s$ of the bundle $E$ of the form \eqref{holomorphic section}. It is then natural to ask whether for every such pair $(E, \varphi_z,\, s)$ we can find a solution corresponding to it. This is the main theorem of this paper:
\begin{theorem}\label{Main}
Given any three polynomials $P(z),\, Q(z), \, R(z)$ such that $P$ and $Q$ are co-prime and $\text{deg}(P)\le \text{deg}(Q) - 1$ there exists a solution to the Extended Bogomolny Equations whose positively charged points are specified by the zeroes of the polynomial $P(z)R(z)^2$ and the respective charges are specified by the multiplicities of the zeroes. 
\end{theorem}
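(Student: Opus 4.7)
The plan is to prove Theorem~\ref{Main} via a Kobayashi--Hitchin correspondence in the spirit of \cite{MH2}. Given the holomorphic data $(P,Q,R)$, I fix on the trivial holomorphic rank two bundle $\mathcal E \to \mathbb{C}\times\mathbb{R}^+$ the standard operator system $\mathcal D_1 = \bar\partial$, $\mathcal D_2 = [\varphi_z,\cdot]$ with $\varphi_z$ of the nilpotent form \eqref{Jordan canonical}, and $\mathcal D_3 = \partial_y$. The complex moment map equations $[\mathcal D_i,\mathcal D_j]=0$ then hold automatically, so the task reduces to finding a Hermitian metric $H$ solving $M(H)=0$ as in \eqref{moment map}, subject to Nahm pole boundary asymptotics with knot singularities at the zeroes of $PR^2$ (weighted by multiplicity), and matching the rotationally symmetric charge $N$ model as $\rho\to\infty$. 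The holomorphic section $s=(Q,R)^{\mathsf{T}}$ enters by forcing $gs$ to vanish fast enough at $y=0$ in unitary gauge, and this is exactly what pins down the vanishing line subbundle $\mathcal L$ of Subsection~\ref{holomorphic line sub-bundle}.

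The first substantive step is to construct an approximate metric $H_0$ in Section~$3$. Near each zero of $PR^2$ one glues in a rotationally symmetric model built from the scalar solutions $u_k$ of Subsection~\ref{model solutions}, with exponent matching the local multiplicity; away from the boundary one interpolates to a background metric adapted to $\mathcal L$. The genuinely new ingredient compared with the compact cross-section case of \cite{MH2} is the radial asymptotic: one must identify a sufficiently accurate approximation to the charge $N$ model as $\rho\to\infty$ for which the error $M(H_0)$ decays fast enough in $\rho$ (with controlled behaviour in $\psi$ near the corner $y=0,\ \rho=\infty$) to feed the linear analysis.

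Writing $H = H_0 e^s$ with $s \in i\mathfrak{su}(E,H_0)$, the equation $M(H)=0$ becomes a quasilinear elliptic equation $N(s)=0$. I would embed this in a one-parameter family $N_t(s)=0$ with $N_0 = M$ and $N_1(s)=0$ admitting a trivial solution, and run the continuity method on $t$. Openness will follow from Fredholmness and invertibility of the linearization $dN_t$ on appropriate weighted H\"older or Sobolev spaces: the boundary corner is controlled via the linear theory of \cite{MH2}, while the radial infinity end is treated using \cite{Melliptic} together with the sharper decay bounds of Subsection~\ref{estimates}. Closedness reduces to a priori $C^0$ bounds on $s$ along a convergent subsequence $t_j\to t^{\star}$; here I would follow \cite{MH2} closely, combining a maximum principle argument on $\tr(H_0^{-1}H)$ with the radial decay estimates to preclude escape of mass to $\rho=\infty$.

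The main obstacle is precisely the interplay between the two non-compact strata, the Nahm pole boundary $y=0$ and radial infinity $\rho=\infty$: building $H_0$ with sharp enough decay at infinity and choosing weighted function spaces in which $dN_t$ is simultaneously Fredholm on both ends is where essentially all of the new analytic work lies. Once $H$ is produced at $t=0$, the unitary triple $(A,\phi,\phi_1)$ is recovered by the factorisation $H=g^{\dagger}g$ and gauging the $\mathcal D_i$ by $g^{-1}$; the zeroes of $s\wedge\varphi_z s = PR^2$ then exhibit exactly the prescribed charge structure. Uniqueness, addressed in Section~$6$ via a Donaldson-type functional together with the estimates of Subsection~\ref{estimates}, will finally identify these solutions with Taubes' instantons in the single-point case and close the bijection of Theorem~\ref{main theorem}.
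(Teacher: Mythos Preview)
Your proposal is correct and follows essentially the same route as the paper: build an approximate metric $H_0$ by gluing model knot solutions near the zeroes of $PR^2$ to a charge~$N$ model at radial infinity (the paper uses the diagonal solution of Theorem~\ref{diagonal metric solution} from \cite{MH1} for this outer region), set $H=H_0e^s$ and deform through the family $N_t(s)=\mathrm{Ad}(e^{s/2})\mathcal M(H)+ts$, prove openness via invertibility of the linearization on weighted iterated-edge H\"older spaces, and obtain closedness from a priori bounds on $|s|$ coming from the differential inequality \eqref{inner product identity} together with the Green's function estimate of Lemma~\ref{main lemma}. The only points worth flagging are cosmetic: the paper works with $|s|^2$ rather than $\tr(H_0^{-1}H)$ in the maximum-principle step, and it inserts an intermediate Borel-sum improvement of $H_0$ (Subsection~4.3) so that the boundary error is $\mathcal O(y^N)$ for all $N$ before running the continuity argument.
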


This theorem will be proved in the next three sections.

\section{The approximate solution}

In this section, given a triple $(P(z),\, Q(z),\, R(z))$ we construct a metric $H_0$ such that 
\begin{equation}
\begin{split}
\mathcal M(H_0) &= \mathcal O(\rho^{-4}) \text{ uniformly as } \rho \to \infty\\
\mathcal M(H_0) &= \mathcal O(y^{-1}) \text{ uniformly as } y\to 0\\
\mathcal M(H_0) & \text{ stays bounded everywhere else}.
\end{split}
\end{equation}
Recall that the relationship between the real moment map \eqref{unitary moment} in unitary gauge and in holomorphic gauge is given by $\mathcal M(H) = g\circ M(H) \circ g^{-1}$ where $H = g^{\dagger}g$. 

\begin{remark}
Once we develop the linear analysis required in the next section, we will be able to modify $H_0$ so that the second condition becomes $\mathcal M(H_0) = \mathcal O(y^{N}) \text{ as } y\to 0$ for all $N>0$. 
\end{remark}

There are three regions that we study separately. Define

\begin{equation}
r_0 := \{\sup |z| : P(z) = 0 \text{ or } R(z) = 0\},
\end{equation}
then consider a finite collection of disjoint open half-balls $\{U_i\}$ in $\C\times \R^+$ each centered at one of the positively charged points $p_i$. We can choose these half-balls to be small enough so that their union is strictly inside the region $\rho < 16r_0$. Then the first region will be the union of these half-balls. If we denote the radius of $U_i$ by $r_i$, then define the second region to be the open set that is the set of points which are at distance at least $r_i/2$ from $p_i$ for all $i$ and inside the half-ball $\rho < 16r_0$. Finally, the last region is the open set given by $\rho > 8r_0$. 

Let us start by constructing an approximate solution on one of the half-balls of the first region. The charge of the center point will have the form $k + 2p$ where $P(z)$ vanishes to order $k$ at this point and $R(z)$ to order $p$. Working in parallel holomorphic gauge, we still have the freedom to perform holomorphic gauge transformations. Since the polynomials $Q(z)$ and $R(z)$ are relatively prime, there exist polynomials $S(z)$ and $T(z)$ such that $Q(z)S(z) + R(z)T(z) = 1$. Then, consider the gauge transformation 
\begin{equation}
g = \begin{pmatrix} T & Q \\ -S & R \end{pmatrix}. 
\end{equation}
Under this, the small section goes to $g^{-1}s = (0,1)^{\dagger}$ and 
\begin{equation}
\varphi_z^g:= g^{-1} \varphi_z g = \begin{pmatrix} pRS & pR^2 \\ -pS^2 & pRS \end{pmatrix}.
\end{equation}
With a further diagonal holomorphic transformation we can make the upper right entry of $\varphi_z^g$ to be exactly equal to $z^{k + 2p}$. The salient features of this reduction are that there exists a gauge inside the half-ball such that the field takes the form 
\begin{equation}
\varphi_z^g = \begin{pmatrix} a & z^{k+2p} \\ b & c \end{pmatrix},
\end{equation}
where $a,\,b,\,c$ denote bounded holomorphic functions, and the small section has the particularly nice form $s = (0,1)^{\dagger}$. As we now show, in this gauge the model metric $H_{k+2p}$ produces an error that is uniformly of order $\mathcal O(y^{-1})$ throughout the half-ball and writing $H_{k+2p} = g_{k+2p}^2$, the section $g_{k+2p}s$ indeed vanishes like $y^{1/2}$ as $y\to 0$. The second claim is obvious and therefore we only need to check the first one. Plugging $H_{k+2p}$ and $\varphi_z^g$ into the moment map equation \eqref{moment map} and conjugating with $g_{k+2p}^{-1}$ it is straightforward to see that the error term is of order $\mathcal O(y^{-1})$.

For the second region the exact same argument we have applied above goes through and gives an error of order $\mathcal O(y^{-1})$. In fact in the second region one can do better. Indeed there exists a parallel holomorphic gauge in which 
\begin{equation}
\varphi_z = \begin{pmatrix} 0 & 1 \\ 0 & 0 \end{pmatrix}.
\end{equation}
In this gauge, write the small section as $s= (f_1(z), f_2(z))^{\dagger}$ where the $f_i$ are holomorphic and since we are in the second region, $f_2 \neq 0$. Therefore, we can gauge transform by 
\begin{equation}
g = \begin{pmatrix}1 & f_2^{-1}f_1 \\ 0 &1 \end{pmatrix}
\end{equation}
to make the small section equal to $(0,1)^{\dagger}$. Since this gauge transform is holomorphic and does not affect $\varphi_z$, the metric $H_0$ is an exact solution in this gauge. 

Finally, let us analyze the last region. In order to be able to produce a decent approximate solution we will need the following result which was proven in \cite{MH1}:
\begin{theorem}\label{diagonal metric solution}
Given the data $(P(z),\,0,\,1)$, there exists a solution to the Extended Bogomolny Equations corresponding to these data such that the solution metric $H$ is diagonal:
\begin{equation}
H = \begin{pmatrix} e^{-u} & 0 \\ 0 & e^u \end{pmatrix}.
\end{equation}
Furthermore, as $\rho \to \infty$, $u = (N+1)\log\rho + \log\sin\psi +\mathcal O(1)$ uniformly in $(\psi,\theta)\in S_+^2$, where $N := \deg P$.
\end{theorem}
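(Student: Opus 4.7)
The plan is to reduce the Extended Bogomolny system to a single scalar semilinear elliptic equation and then produce the solution via barrier / sub-supersolution methods. Substituting the diagonal ansatz $H = \text{diag}(e^{-u}, e^u)$ and $\varphi_z = \begin{pmatrix} 0 & P(z) \\ 0 & 0 \end{pmatrix}$ into the moment map equation \eqref{moment map}, the off-diagonal entries vanish and both diagonal entries collapse to the same scalar equation
\[
\Delta u + |P(z)|^2 e^{-2u} = 0
\]
on $\C \times \R^+$, subject to the Nahm pole boundary condition $u \sim \log y$ as $y \to 0$ away from the zeros of $P$, the model singularity behavior $u \sim u_{k_i}$ near each zero $z_i$ of $P$ of multiplicity $k_i$, and appropriate decay at infinity. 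The crucial structural property is that $F(u) := |P(z)|^2 e^{-2u}$ is strictly decreasing in $u$, which makes the comparison principle and a Perron-type monotone iteration applicable.

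The first step is to build a global approximate solution $u_0$ by patching explicit model profiles with a partition of unity: the translated model $u_{k_i}(z - z_i, y)$ in a small half-ball around each boundary singularity $(z_i,0)$, the trivial Nahm profile $\log y$ along a collar of the rest of the boundary, and the degree-$N$ rotationally symmetric model $u_N$ centered at the origin in the exterior region $\rho > 2r_0$, where $r_0 > \max_i |z_i|$. Since $|P(z)|^2 = |z|^{2N} + \mathcal{O}(|z|^{2N-1})$ as $|z| \to \infty$ and $u_N$ exactly solves the corresponding limiting equation, the residual $\Delta u_0 + |P|^2 e^{-2u_0}$ is uniformly $\mathcal{O}(\rho^{-3})$ at infinity and $\mathcal{O}(y^{-1})$ near the boundary. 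One then constructs a subsolution and a supersolution by shifting $u_0$ by suitable positive and negative constants, so that the rescaled nonlinearity absorbs the residual in the appropriate direction, and applies the Perron monotone iteration on an exhausting sequence of bounded subdomains with Dirichlet data matching $u_0$ on the artificial boundaries. The limit of the iterates satisfies $u_- \le u \le u_+$, hence $u = u_0 + \mathcal{O}(1)$ everywhere.

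The claimed asymptotic expansion $u = (N+1)\log\rho + \log \sin\psi + \mathcal{O}(1)$ then follows from this uniform bound together with the explicit spherical form of the model $u_N$ in the exterior region. The main obstacle is the non-compactness of the cross-section $\C$: unlike the compact setting of \cite{MH2}, where a variational Hermitian--Einstein argument applies cleanly, here the Perron scheme must be justified on an unbounded domain and behavior must be controlled simultaneously at the two distinct ends $y = 0$ and $\rho = \infty$. The existence of global, rather than merely local, barriers hinges on the quantitative decay of the residual $\mathcal M(H_0)$ provided by the exterior model, and the passage to the limit of the exhaustion requires uniform up-to-boundary Schauder-type estimates on the iterates so that the barrier inequalities persist through the limit.
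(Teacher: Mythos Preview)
The paper does not actually prove this theorem; it is quoted as a result from \cite{MH1} and used as a black box in the construction of the approximate solution in the region $\rho > 8r_0$. So there is no proof in the paper to compare against. The argument in \cite{MH1} (and its sequel \cite{MH2}) proceeds via the Hermitian--Einstein continuity method rather than barriers, so your sub-/supersolution strategy is a genuinely different route. What your approach buys is a more elementary, purely scalar argument that avoids the operator-theoretic machinery; what the continuity method buys is a framework that generalizes directly to the non-diagonal, bundle-valued setting of the present paper.

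That said, your sketch has a real gap at the barrier step. You assert that shifting $u_0$ by constants yields global sub- and supersolutions because the monotone nonlinearity ``absorbs the residual,'' but this requires $|P(z)|^2 e^{-2u_0}$ to dominate the residual pointwise, and you have not checked this in the transition annuli where the partition of unity is active. There the residual picks up terms of size $O(1)$ from derivatives of the cutoffs hitting the difference of two model profiles, while $|P|^2 e^{-2u_0}$ can be small (it is $O(\rho_i^{-2})$ near the knot point and $O(y^{-2})$ near the boundary, but has no uniform positive lower bound on the full transition region). A constant shift $C$ alone will not close this; you need either a more carefully designed $u_0$ whose residual is already controlled by $|P|^2 e^{-2u_0}$ everywhere, or a non-constant corrector (e.g.\ a solution of a linear problem with the residual as right-hand side) added to $u_0$ before shifting. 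Second, the Perron exhaustion must cope with the singular boundary $y=0$: you cannot impose Dirichlet data equal to $u_0$ there since $u_0 = -\infty$. The standard fix is to work with the bounded quantity $v = u - u_0$ and show that $v$ extends continuously to $0$ on $\{y=0\}$, but this requires boundary regularity for the degenerate equation $\Delta v + |P|^2 e^{-2u_0}(e^{-2v}-1) = -\text{residual}$, which you have not addressed. Both issues are repairable, but as written the proposal is a plausible outline rather than a proof.
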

Let $\chi$ be a smooth function on $[0,\infty)_x$ which is identically zero for $x\le 1/4$ and identically $1$ for $x\ge 3/4$. Consider the approximate solution metric given by 
\begin{equation}
\begin{split}
H_{\rho > 8r_0} &= \begin{pmatrix} e^{-u} & -e^{-u}\Sigma \\ -e^{-u}\bar \Sigma & e^{u} + e^{-u}|\Sigma|^2 \end{pmatrix}\\
\Sigma  &= \frac{\chi(r/\rho)Q(z)}{R(z)},
\end{split}
\end{equation}
where $u$ is the solution given by Theorem \ref{diagonal metric solution}. Notice that $\Sigma$ is defined everywhere since $R(z)$ has no zeroes in the region where $\chi(r/\rho)$ is positive. Writing this metric as $H_{\rho > 8r_0} = g^{\dagger}g$, we observe that 
\begin{equation}
g = \begin{pmatrix} e^{-u/2} & -e^{-u/2}\Sigma \\ 0 & e^{u/2} \end{pmatrix}
\end{equation}
satisfies the small section condition, meaning that $gs \sim \mathcal O(y^{1/2})$ as $y\to 0$. In particular this implies that the condition $\mathcal M(H_{\rho > 8r_0}) = \mathcal O(1)$ is met. Finally we need to understand the error term when we plug $H_{\rho > 8r_0}$ into the moment map equation. After some calculation it turns out that 
\begin{equation}
\mathcal M(H_{\rho > 8r_0}) = \begin{pmatrix} A & \bar B \\ B & -A \end{pmatrix},
\end{equation}
where 
\begin{equation}\label{error terms}
\begin{split}
A &= \Delta u + |P(z)|^2 e^{-2u} + 4e^{-2u}|\pa_z\bar\Sigma|^2 + e^{-2u}|\pa_y\bar\Sigma|^2\\
B &= e^{-u}(\Delta\bar \Sigma - 8\pa_{\bar z}u \cdot\pa_z\bar\Sigma - 2\pa_yu\cdot\pa_y\bar\Sigma).
\end{split}
\end{equation}
By Theorem \eqref{diagonal metric solution}, 
\begin{equation}
\Delta u  + |P(z)|^2e^{-2u} = 0
\end{equation}
and due to its asymptotics, $e^{-2u} = \mathcal O(\rho^{-2})$ in the region where $\chi$ is not identically equal to $1$. Now, in the subregion where $\Sigma$ is nonzero, it decays at least as fast as $\rho^{-1}$ due to the degree condition for the polynomials $Q$ and $R$. Furthermore, if any of the derivatives $\pa_z,\,\pa_{\bar z},\,\pa_y$ hits $\chi(r/\rho)$ it produces a factor of $\rho^{-1}$. Finally, in the region where $\chi$ is identically equal to $1$, both $\pa_z\bar\Sigma$ and $\pa_y\bar\Sigma$ are identically zero. From these observations it becomes clear that both $A$ and $B$ are uniformly of order $\mathcal O(\rho^{-4})$ as $\rho\to\infty$. 

Finally, in order to get an approximate metric on the whole $\C\times\R^+$, we glue the approximate metrics in each region using suitable rescalings of the bump function $\chi$. It is clear that the final metric will satisfy the conditions we requested in the beginning of this section.

\section{Linear Analysis}

The subject of this section is to study the mapping properties of the linearization of \eqref{unitary moment} at a given metric $H$ between appropriate Banach spaces. Large parts of this section follow the equivalent section in \cite{MH2}, but are repeated to make the exposition self-contained.

\subsection{The linearized operator}

We start by stating without proof a Proposition in \cite{MH2}. Equation \eqref{inner product identity} is needed for the a priori estimates and the linearized operator as it appears in \eqref{linearized operator} is needed for the linear analysis of this section. 
\begin{proposition}
If we write $H = H_0e^s$ for some $s\in i\mathfrak{su}(E,H_0)$ then 
\begin{equation}\label{linearization}
\mathcal M(H) = \mathcal M(H_0) + \gamma(-s)\mathcal L_{H_0}s + Q(s),
\end{equation}
where 
\begin{equation}
\begin{split}
\mathcal L_{H_0} s &:= \frac{i}{2}\Lambda(\mathcal D_1\mathcal D_1^{\dagger_{H_0}} + \mathcal D_2\mathcal D_2^{\dagger_{H_0}})s +\mathcal D_3\mathcal D_3^{\dagger_{H_0}}s \\
Q(s)s&:= \frac{i}{2}\Lambda((\mathcal D_1\gamma(-s))\mathcal D_1^{\dagger_{H_0}} + (\mathcal D_2\gamma(-s))\mathcal D_2^{\dagger_{H_0}})s + (\mathcal D_3\gamma(-s))\mathcal D_3^{\dagger_{H_0}}s,
\end{split}
\end{equation}
and $\gamma(s) := \frac{e^{\text{ad}_s}-1}{\text{ad}_s}$. Moreover, 
\begin{equation}\label{inner product identity}
\langle \mathcal M(H) - \mathcal M(H_0), s \rangle_{H_0} = \Delta |s|_{H_0}^2 + \frac{1}{2}\sum|v(s)\nabla_i s|_{H_0}^2
\end{equation}
where $v(s) = \sqrt{\gamma(-s)}$, $\Delta = \nabla^{\star}\nabla$, $\nabla_i = \mathcal D_i + \mathcal D_i^{\dagger_{H_0}}$ for $i\in\{1,2\}$ and $|v(s)\nabla_3 s|^2 = |v(s)\mathcal D_3 s|^2 + |v(s)\mathcal D_3^{\dagger} s|^2$. Finally, letting $\star$ denote the adjoint with respect to the usual inner product on forms, we have the simplified formula for the linearized operator 
\begin{equation}\label{linearized operator}
\mathcal L_H = \frac{1}{4}(\nabla_1^{\star}\nabla_1 + \nabla_2^{\star}\nabla_2) - (\nabla_y^2 - \phi_1^2) + \frac{1}{2}[\mathcal M(H),\cdot ],
\end{equation}
where $\phi_1^2 := [\phi_1,[\phi_1, \cdot]]$.
\end{proposition}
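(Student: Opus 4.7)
The plan is a direct gauge-theoretic computation in the spirit of Donaldson and Simpson. The three assertions are all consequences of a single formula describing how the Hermitian adjoints $\mathcal{D}_i^{\dagger_H}$ depend on $H$, so the first task is to pin this down. Writing $H = H_0 e^s$ with $s \in i\mathfrak{su}(E,H_0)$, the defining relations such as $\partial_{\bar z} H(\sigma,\tau) = H(\mathcal{D}_1^{\dagger_H}\sigma,\tau) + H(\sigma,\mathcal{D}_1\tau)$, and the analogues for $i=2,3$, combined with $H = H_0 e^s$ produce a conjugation-type transformation law for $\mathcal{D}_i^{\dagger_H}$. Expanding via the standard identity
\begin{equation*}
e^{-s}\mathcal{D}_i(e^{s}\tau) = \mathcal{D}_i\tau + [\gamma(-\operatorname{ad}_s)(\mathcal{D}_i s),\tau]
\end{equation*}
and substituting into the definition of $\mathcal{M}(H)$, one separates the $s$-linear piece, which assembles into $\gamma(-s)\mathcal{L}_{H_0}s$, from a well-defined quadratic remainder $Q(s)$. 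This yields \eqref{linearization} together with the explicit form of $\mathcal{L}_{H_0}$ and $Q$.

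For the pointwise identity \eqref{inner product identity}, I would pair both sides of $\mathcal{M}(H)-\mathcal{M}(H_0) = \gamma(-s)\mathcal{L}_{H_0}s + Q(s)$ with $s$ in the $H_0$-metric and perform a Bochner-type manipulation on each slot. The algebraic input is $v(s)^2 = \gamma(-s)$, so that for the combined Chern derivatives $\nabla_i = \mathcal{D}_i + \mathcal{D}_i^{\dagger_{H_0}}$ one has
\begin{equation*}
\langle \gamma(-s)\nabla_i s, \nabla_i s\rangle_{H_0} = |v(s)\nabla_i s|_{H_0}^2;
\end{equation*}
the analytic input is that the second-order terms, once the $\Lambda$-contractions in the $(1,1)$-slots involving $\mathcal{D}_1$ and $\mathcal{D}_2$ are carried out, reassemble into $\Delta|s|_{H_0}^2$ via the pointwise identity $\Delta|s|_{H_0}^2 = 2\langle \nabla^\star \nabla s, s\rangle_{H_0} - 2|\nabla s|_{H_0}^2$. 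The remainder $Q(s)$ contributes exactly the nonlinear cross terms required for the two sides to match. The main bookkeeping obstacle is tracking the $d\bar z$, $dz$, $dy$ type of each $\mathcal{D}_i$ through the $\Lambda$-contraction so that the factors of $i/2$ combine to yield the Euclidean $\Delta$ rather than $\bar\partial\partial$.

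Finally, for the simplified form \eqref{linearized operator} of $\mathcal{L}_H$ at $H$, I would expand $\mathcal{D}_i\mathcal{D}_i^{\dagger_H} + \mathcal{D}_i^{\dagger_H}\mathcal{D}_i$ as $\nabla_i^\star\nabla_i$ plus the anti-self-adjoint commutator $[\mathcal{D}_i,\mathcal{D}_i^{\dagger_H}]$, and then use that when the commutator piece acts on the self-adjoint section $s$ by bracket it reproduces exactly $\tfrac{1}{2}[\mathcal{M}(H),\cdot]$ in aggregate across $i=1,2,3$. For the $\mathcal{D}_3$ slot the substitution $\mathcal{D}_3 = \partial_y + [A_y - i\phi_1,\cdot]$, together with $A_y$ being skew- and $i\phi_1$ being self-adjoint, makes the square collapse directly to $\nabla_y^2 - \phi_1^2$ after separating Hermitian from anti-Hermitian parts. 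I expect the dominant difficulty throughout to be keeping the Hermitian versus anti-Hermitian decompositions and the $i/2$ factors consistent, and the cleanest verification is pointwise in a local $H$-unitary frame, where the Chern connection terms vanish and only the algebraic identities in $\operatorname{End}(E)$ remain.
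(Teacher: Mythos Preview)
The paper does not actually prove this proposition: it is explicitly ``stated without proof'' and attributed to \cite{MH2}. Your outline is the standard Donaldson--Simpson computation and is, as far as one can tell from the literature, essentially the argument carried out in that reference; in particular the K\"ahler identities $i[\Lambda,\mathcal D_i] = (\mathcal D_i^{\dagger_H})^\star$ that the paper invokes a few lines later are exactly what convert $\tfrac{i}{2}\Lambda\,\mathcal D_i\mathcal D_i^{\dagger}$ into the $\nabla_i^\star\nabla_i$ form, so your plan for \eqref{linearized operator} is on the right track.

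Two small points of care. First, in your displayed identity the notation $\gamma(-\operatorname{ad}_s)$ is redundant under the paper's convention $\gamma(s) = (e^{\operatorname{ad}_s}-1)/\operatorname{ad}_s$; you should write $\gamma(-s)$. Second, the bracket $[\gamma(-s)(\mathcal D_i s),\tau]$ is correct only when $\tau$ is an endomorphism (which is indeed the relevant case, since $\mathcal M$ and $\mathcal L$ act on $i\mathfrak{su}(E,H_0)$); if $\tau$ were a section of $E$ the right-hand side would be left multiplication, not a commutator. Neither affects the substance of your argument.
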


\subsection{Mapping properties of $\mathcal L_H$}

We want to prove that the linearized operator $\mathcal L_H$ is Fredholm and in fact invertible between appropriately chosen Banach spaces. Fredholmness of this operator is proven through a parametrix construction, which crucially depends on the invertibility of the associated normal operator at each point of the boundary and near $\rho \to \infty$. The normal operator $N(\mathcal L)$ is the part of $\mathcal L$ which is precisely homogeneous of degree $-2$ with respect to dilations. At a boundary point of zero charge the normal operator takes the form 
\begin{equation}
N(\mathcal L) = \Delta_{\R^3} + N_S
\end{equation}
while at a boundary point of charge $k$ or as $\rho\to\infty$ the normal operator takes the form 
\begin{equation}
N(\mathcal L) = \pa^2_{\rho} + \frac{1}{\rho}\pa_{\rho} + \frac{1}{\rho^2}\left( \pa^2_{\psi} + \cot \psi \pa_{\psi} - \frac{1}{\sin^2\psi} \nabla^{\star}_{\theta}\nabla_{\theta}\right) + N_S
\end{equation}
where in both formulas 
\begin{equation}
N_S := -\phi_1^2 + \frac{1}{2}([\phi_z^{\dagger},[\phi_z,\cdot]]+ [\phi_z,[\phi_z^{\dagger},\cdot]]).
\end{equation}
A few comments are in order. When we work near a zero charge point, the fields that appear in the normal operator are precisely the fields appearing in the model solution with zero charge. This is because we have assumed that this is the form of the solution at this point to leading order and up to gauge equivalence. Similarly, near a charge $k$ point, the fields $(A,\phi_z,\phi_1)$ that appear form the unitary triple corresponding to the model knot solution of charge $k$. Finally, near infinity the fields that appear are those of a model knot solution of charge $N:= \deg p$. It is straightforward from the construction we gave in the previous section that the approximate solution we constructed looks like this to leading order. 

To describe the spaces between which the normal operators and the linearized operator itself are invertible, we need to study the indicial roots of the normal operators. These can be considered as the formal rates $\lambda$ of vanishing or blow up of a solution near that point. In the case of the normal operator at a zero charge point, this amounts to the existence of a locally defined smooth section $s$ so that the equation $N(\mathcal L)(y^{\lambda} s) = 0$ holds. This is equivalent to requiring that $\mathcal L(y^{\lambda} s ) = \mathcal O(y^{\lambda -1})$ as opposed to $\mathcal O(y^{\lambda -2})$ which is the expected order for a generic value of $\lambda$. Similarly, in the case of the normal operator near a charged point or near radial infinity, we are looking for the $\lambda$ such that there exists a field $\Phi(\psi,\theta)$ on the half sphere $S_+^2$ such that $N(\mathcal L)(\rho^{\lambda} \Phi) = 0$. The calculations of indicial roots for the zero charge points was carried out in \cite{MW1} and for general charge in \cite{MW2}. These computations can be summarized as follows:
\begin{theorem}
The set of indicial roots at a zero charge point is given by $\{-1,2\}$. The set of indicial roots of $N(\mathcal L)$ at a charge $k$ point for any $k$ is given by the formula
\begin{equation}
-\frac{1}{2} \pm \sqrt{\gamma +1/4}
\end{equation}
where $\gamma$ is an eigenvalue of the spherical part of the operator $\mathcal N(\mathcal L)$. The smallest such $\gamma$ satisfies $\gamma_0 >2$ . 
\end{theorem}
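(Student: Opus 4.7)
The plan is to compute the indicial roots in each of the three cases by explicit separation of variables for the scale-invariant normal operator, and then to reduce the eigenvalue bound $\gamma_0>2$ to a Rayleigh-quotient estimate on the half-sphere, following \cite{MW1, MW2}.

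At a zero-charge boundary point, I would look for solutions to $N(\mathcal{L})s=0$ of the form $s(y)=y^{\lambda}s_0$ where $s_0$ is a constant section of $i\mathfrak{su}(E,H_0)$. Using the model fields $\phi_1=\tfrac{ih}{2y}$ and $\phi_z=\tfrac{e}{y}$ in a standard $\mathfrak{sl}_2$-triple $(e,f,h)$, a direct commutator calculation shows that $N_S$ acts on all of $i\mathfrak{su}(2)$ as multiplication by $\tfrac{2}{y^2}$; for instance, on $h$ one checks that $\tfrac{1}{2}([\phi_z^{\dagger},[\phi_z,h]]+[\phi_z,[\phi_z^{\dagger},h]])=\tfrac{2h}{y^2}$ and $\phi_1^2 h=0$, and similar computations work on $e+f$ and $i(e-f)$. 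With the analyst's sign convention $\Delta_{\R^3}=-\sum\partial_i^2$, the indicial equation reduces to $\lambda(\lambda-1)=2$, whose roots are $\{-1,2\}$.

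At a charged boundary point or at radial infinity, the relevant normal operator is exactly homogeneous of degree $-2$ under radial dilation $\rho\mapsto\mu\rho$. I would separate variables $s=\rho^{\lambda}\Phi(\psi,\theta)$ and rewrite $N(\mathcal{L})s=0$ as an eigenvalue problem $\hat L\Phi=\gamma\Phi$, where $\hat L$ is a self-adjoint operator on sections over the half-sphere $S_+^2$ subject to the generalized Nahm-pole condition at $\partial S_+^2$. The radial part of $N(\mathcal{L})$ applied to $\rho^{\lambda}$ contributes $\lambda(\lambda+1)\rho^{\lambda-2}$, so matching yields the indicial quadratic $\lambda^2+\lambda-\gamma=0$, whence $\lambda=-\tfrac{1}{2}\pm\sqrt{\gamma+\tfrac{1}{4}}$.

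The main obstacle is the lower bound $\gamma_0>2$ on the smallest eigenvalue of $\hat L$. I would decompose sections over $S_+^2$ into weight components for the adjoint action of the diagonal model field, on each of which $\hat L$ splits as the scalar spherical Laplacian plus a non-negative potential coming from the Higgs and $\phi_1$ commutators. The scalar Dirichlet Laplacian on $S_+^2$ already has first eigenvalue $2$ (attained by $\cos\psi$), and the additional potential is strictly positive on a set of positive measure away from $\partial S_+^2$; a Rayleigh-quotient argument then shows that the first eigenvalue of $\hat L$ is strictly greater than $2$. The Nahm-pole boundary condition rules out the constant-mode competitor and must be analyzed carefully to confirm that the relevant minimizer satisfies the vanishing required to land in the Dirichlet class. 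Finally, I would check that the approximate metric $H_0$ constructed in Section 3 produces, at radial infinity, exactly the same spherical problem as the charge-$N$ knot singularity with $N=\deg P$, so that a single spectral estimate on $S_+^2$ controls both the charged boundary case and the radial infinity case simultaneously.
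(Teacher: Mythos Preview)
The paper does not give its own proof of this theorem: it simply records the statement and attributes the calculations to \cite{MW1} for the zero-charge case and to \cite{MW2} for the general-charge case. Your proposal is therefore not competing against an argument in the paper, but is an outline of how the cited computations go.

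As such an outline, it is essentially on target. The zero-charge computation is correct: with the model fields $\phi_1=\tfrac{ih}{2y}$ and $\phi_z=\tfrac{e}{y}$ one checks directly that $N_S$ acts as $\tfrac{2}{y^2}$ on each of $h$, $e$, $f$, and the indicial equation $\lambda(\lambda-1)=2$ follows. The separation of variables at a charged point is also correct; note that the radial coefficient you implicitly use, $\partial_\rho^2+\tfrac{2}{\rho}\partial_\rho$, is the one consistent with the indicial formula $-\tfrac12\pm\sqrt{\gamma+\tfrac14}$ (the $\tfrac{1}{\rho}$ in the paper's displayed $N(\mathcal L)$ appears to be a misprint).

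The genuine work, as you correctly flag, is the bound $\gamma_0>2$. Two points in your sketch would need to be sharpened to match \cite{MW2}. First, the spherical operator is not simply the scalar hemisphere Laplacian plus a potential: the $\theta$-part is the covariant Laplacian $\nabla_\theta^\star\nabla_\theta$ for the model connection $A_k$, so after your weight decomposition each scalar block picks up additional zeroth- and first-order terms in $\theta$ that must be tracked explicitly. Second, the Nahm-pole condition at $\psi=\pi/2$ is not literally Dirichlet on every weight component, so the comparison with the first Dirichlet eigenvalue $2$ of the scalar hemisphere Laplacian has to be justified block by block. These are precisely the computations carried out in \cite{MW2}, and your proposal correctly identifies them as the crux.
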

\begin{remark}
The above Theorem makes sense precisely because the spherical part of $\mathcal N(\mathcal L)$ has discrete spectrum as proven in \cite{MW2}.
\end{remark}

Finally, before we state the main theorem of this section, we define suitable Banach spaces.
\begin{definition}
Let $C^{k,a}_{\text{ie}}(\C\times \R^+)$ be the space of functions on $\C\times \R^+$ such that 
\begin{itemize}
\item in a neighborhood $U$ of a point of zero charge 
\begin{equation}
\|u\|_{L^{\infty}} + \sup\limits_{i + |\beta| \le k} [(y\pa_y)^i(y\pa_x)^{\beta}u]_{\text{ie};0,a} <\infty
\end{equation}
where 
\begin{equation}
[u]_{\text{ie};0,a} := \sup\limits_{\substack{ (x,y)\neq (x',y')\\ (x,y),(x',y')\in U}} \frac{|u(x,y)- u(x',y')|(y+y')^a}{|y-y'|^a + |x-x'|^a},
\end{equation}
\item in a neighborhood V of a point of positive charge 
\begin{equation}
\|u\|_{L^{\infty}} + \sup\limits_{i + p + q \le k} [(\rho\pa_{\rho})^i(\pa_{\theta})^p\pa_{\psi}^q u]_{\text{ie};0,a} <\infty
\end{equation}
where 
\begin{equation}
[u]_{\text{ie};0,a} := \sup\limits_{\substack{ (\rho,\psi,\theta)\neq (\rho',\psi',\theta')\\ (\rho,\psi,\theta),(\rho',\psi',\theta')\in V}} \frac{|u(\rho,\psi,\theta)- u(\rho',\psi',\theta')|(\rho+\rho')^a}{(|\theta -\theta'| + |\phi - \psi'|)^a(\rho+\rho')^a + |\rho-\rho'|^a},
\end{equation}
\item away from the boundary $y = 0$ we require that $u$ lies in the regular H\"older space $C^{k,a}$. 
\end{itemize}
\end{definition}
\begin{theorem}
Let $R$ denote a function which near each point of positive charge $(p_2,p_3,0)$ is given by $R = \sqrt{(x_2 - p_2)^2 + (x_3 - p_3)^2 + y^2} + \mathcal O(x_2^2x_3^2)$ and $R = \mathcal O(1)$ uniformly as $\rho \to \infty$. Also define $\hat \rho := \sqrt{\rho^2+1}$. If $\mu \in (-2,1)$ and $\nu_1,\nu_2 \in (-1/2 - \sqrt{\gamma_0 +1/4}, -1/2 + \sqrt{\gamma_0 + 1/4})$ then for all $k \geq 0$ and $0<a<1$, 

\begin{equation}
\mathcal L_H: \psi^{\mu}R^{\nu_1}\hat \rho^{\nu_2}C^{k+2,a}_{\text{ie}}(\C\times\R^+; i\mathfrak{su}(E,H_0)) \rightarrow \psi^{\mu-2}R^{\nu_1-2}\hat \rho^{\nu_2 -2}C^{k,a}_{\text{ie}}(\C\times\R^+; i\mathfrak{su}(E,H_0))
\end{equation}
is an isomorphism.
\end{theorem}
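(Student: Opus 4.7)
The proof follows the edge calculus framework of \cite{MW1, MW2} and its adaptation in \cite{MH2}, with modifications needed to handle the non-compactness of $\C$. I would proceed in three steps: a parametrix construction showing Fredholmness, injectivity via the energy identity \eqref{inner product identity}, and surjectivity via formal self-adjointness. For the first step, $\mathcal L_H$ is an elliptic edge operator whose singular strata are the zero-charge boundary points, the positive-charge boundary points, and the sphere at infinity. The weight exponents $\mu \in (-2,1)$ and $\nu_1, \nu_2 \in (-1/2 - \sqrt{\gamma_0+1/4}, -1/2 + \sqrt{\gamma_0+1/4})$ are chosen precisely to lie in the Friedrichs gap between the indicial roots of the corresponding normal operators computed immediately above. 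On each model space (the half-space $\R^3_+$ for zero-charge points, the cone over $S^2_+$ for positive-charge points and infinity), the normal operator is invertible on the corresponding weighted spaces by separation of variables and explicit Bessel-type analysis along the lines of \cite{MW1, MW2}. Patching these model inverses via a partition of unity subordinate to a cover by neighborhoods of the strata yields a parametrix $P$ with $I - P\mathcal L_H$ compact on the stated H\"older spaces, establishing Fredholmness.

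For injectivity, suppose $\mathcal L_H s = 0$ with $s$ in the domain space. Substituting $H_0 \mapsto H$ and $H \mapsto H e^{\epsilon s}$ in \eqref{inner product identity} and extracting the $\epsilon^2$ terms yields the pointwise identity
\begin{equation*}
\langle \mathcal L_H s, s\rangle_H = \Delta |s|^2_H + \tfrac{1}{2}\sum_i |\nabla_i s|^2_H.
\end{equation*}
The weight conditions place $|s|^2$ in the Friedrichs half of each indicial gap, giving enough decay at each stratum and at radial infinity to justify integration by parts against $|s|^2$ without boundary contributions. Since $\mathcal L_H s = 0$, this forces $\sum_i \|\nabla_i s\|^2 = 0$, whence $\nabla_i s = 0$ for all $i$; combined with decay of $s$ as $\rho \to \infty$ one concludes $s \equiv 0$.

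For surjectivity, $\mathcal L_H$ is formally self-adjoint with respect to the natural $L^2$ pairing on $i\mathfrak{su}(E,H_0)$-valued sections, and the indicial gaps at each stratum are symmetric about the corresponding natural adjoint weight, so the dual weighted space (for the codomain) also falls within the admissible Friedrichs range after the usual shift. By duality the cokernel is identified with a kernel to which the argument of the previous paragraph applies verbatim, yielding the claimed isomorphism. The principal new difficulty compared to the compact-base setting of \cite{MH2} is the genuine edge behavior at infinity: in \cite{MH2} the normal operator at infinity is merely an ODE in $y$ because $\Sigma$ is compact, whereas here on $\C$ it is itself a full edge operator on the cone over $S^2_+$, structurally analogous to the normal operator at a positive-charge point. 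The parametrix patching must therefore carefully track how these two edge structures interact, and the uniform radial decay estimates emphasized in the introduction as the main technical novelty of this work are essential in verifying compactness of the parametrix remainder.
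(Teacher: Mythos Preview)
Your overall architecture matches the paper's: Fredholmness via the edge/iterated-edge parametrix machinery (which the paper simply cites to \cite{MH2, MW2}), and injectivity via the energy identity followed by integration by parts. The paper then concludes surjectivity in one line from ``Fredholm of index zero plus injective'' rather than via your duality argument, but that is a cosmetic difference.

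There is, however, a real gap in your injectivity step. You assert that ``the weight conditions place $|s|^2$ in the Friedrichs half of each indicial gap, giving enough decay at each stratum and at radial infinity to justify integration by parts.'' This is not true as stated. The allowed weight ranges span the \emph{entire} gap between consecutive indicial roots, not just the decaying half: since $\gamma_0>2$ the interval for $\nu_2$ contains, for instance, all of $(-2,1)$, so a section in the domain may \emph{grow} like $\rho^{1-\epsilon}$ as $\rho\to\infty$; likewise $\mu\in(-2,1)$ permits blow-up like $y^{-2+\epsilon}$ at the zero-charge boundary. For such weights the boundary term $\int_{\rho=R}\langle\nabla s,s\rangle$ need not vanish as $R\to\infty$, and the integration by parts is not a priori justified.

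The paper closes this gap by first invoking the regularity theory of \cite{MW2}: any $s$ in the kernel of $\mathcal L_H$ automatically has the \emph{optimal} indicial behavior at every stratum, namely $|s|\le C\,\psi\, R^{-1/2+\sqrt{\gamma_0+1/4}}$ in the region $\rho<16r_0$ and $|s|\le C\,\psi\,\rho^{-1/2-\sqrt{\gamma_0+1/4}}$ for $\rho>8r_0$. Because $\gamma_0>2$, the latter gives decay strictly faster than $\rho^{-2}$, and the former gives linear vanishing in $\psi$; together these are what actually license the integration by parts. You need to insert this regularity boost before appealing to the energy identity.
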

\begin{proof}
The proof of the fact that the operator $\mathcal L_H$ is Fredholm of index zero between these space is sketched in \cite{MH2} where further references are given. In order to prove that this map is an isomorphism we therefore need to show that it is injective. Assume that there exists $s$ in the domain such that $\mathcal L_H s = 0$. Then it follows for example from the regularity theory developed in the later parts of \cite{MW2} that in the region $\rho < 16r_0$, $|s| \le C\psi R^{-1/2 + \sqrt{\gamma_0 + 1/4}}$ and in the region $\rho > 8r_0$, $|s| \le \psi\rho^{-1/2 - \sqrt{\gamma_0 + 1/4}}$. Since $\gamma_0$ is strictly greater than $2$, these bounds jastify an integration by parts using the formula \eqref{linearized operator} and therefore $\nabla_1 s = \nabla_2 s = \nabla _y s = 0$. Since $s$ has to vanish near infinity, we obtain that indeed $s = 0$ as wanted.
\end{proof}

\subsection{Improving the approximate solution}

Now that we have developed the linear analysis required, we can explain how to correct the approximate metric $H_0$ constructed in the previous section so that $\mathcal M(H_0) = \mathcal O(y^N)$ as $y\to 0$ for all $N>0$ as promised. The approximate metric constructed in the previous section satisfies $\mathcal M(H_0) = \mathcal O(\psi^{-1}R^{-1}\hat \rho^{-4})$. We want for all $j\ge -1$ to iteratively find sections $s_j\in i\mathfrak{su}(E, H_0)$ such that $H_0^{(j+1)} := H_0^{(j)}e^{s_j}$ satisfies $\mathcal M(H_0^{(j+1)}) = F_{j+1}\psi^{j+1}R^{j+1}\hat \rho^{-4} + \mathcal O(\psi^{j+2}R^{j+2}\hat \rho^{-4-\epsilon})$. Here $F_{j+1}$ is an everywhere bounded smooth section. Using equation \eqref{linearization} we see that it suffices to solve the equation 
\begin{equation}\label{local y}
\gamma(-s_j)\mathcal L_{H_0^{(j)}} s_j = - F_j\psi^jR^j\hat \rho^{-4} +  \mathcal O(\psi^{j+1-\epsilon}R^{j+1-\epsilon}\hat \rho^{-4-\epsilon}).
\end{equation}
First note that since $j \ge -1$, and we look for $s_j$ which is bounded, the equation forces $s_j = \mathcal O(y^{1-\epsilon})$ as $y\to 0$. We may not be able to set $\epsilon = 0$ because $1$ is an indicial root for $\psi$. Now, notice that if $s_j$ decays at least as fast as $\rho^{-\epsilon}$ as $\rho\to \infty$ then $\gamma(-s_j) = 1 + \mathcal O(y^{1-\epsilon}\hat \rho^{-\epsilon})$ and therefore $s_j$ must decay at precisely the rate $\rho^{-2}$ at infinity. Next, since $\mathcal L_{H_0^{(j)}}$ equals the normal operator $N(\mathcal H_0)$ to leading order, it suffices to solve the equation 
\begin{equation}
N(\mathcal L_{H_0})s_j = -  F_j\psi^jR^j\hat \rho^{-4}.
\end{equation}
As long as $j+2$ is not an indicial root for either $\psi$ or $R$ we can solve this equation near $y = 0$ so that $s_j = \mathcal O(\psi^{j+2}R^{j+2}\hat\rho^{-2})$. Otherwise if $j+2$ is an indicial root for either $\psi$ or $R$, then the solution will be of the form $s_j = \mathcal O(\psi^{j+2}(\log \psi)^iR^{j+2}(\log R)^k\hat\rho^{-2})$ for some finite numbers $i,k$. The possible appearance of these extra $\log$ terms is the reason why we needed to relax the condition on the error terms in equation \eqref{local y}. Finally, we take a Borel sum to obtain a section 
\begin{equation}
s \sim \sum s_j
\end{equation}
which satisfies $\mathcal M(H_0e^s) = \mathcal O(y^N)$ as $y\to 0$ for all $N>0$.

\section{The continuity method}

\subsection{Set up and openness}

The main idea is very similar to the deformation argument of \cite{MH2}. We start by considering the operator 
\begin{equation}\label{continuity operator}
N_t(s) := \text{Ad}(e^{s/2})\mathcal M(H) + ts = 0,
\end{equation}
where $H = H_0e^s$, $H_0$ is the approximate metric we have constructed above and the term $\text{Ad}(e^{s/2}): i\mathfrak{su}(E,H)\rightarrow i\mathfrak{su}(E,H_0)$ is a bundle isomorphism. Moreover, 
\begin{equation}
N_t(s): \hat{y}^{\mu}R^{\nu}\hat \rho^{-\beta}C^{k+2,a}_{\text{ie}}(\C\times\R^+; i\mathfrak{su}(E,H_0)) \rightarrow \hat{y}^{\mu-2}R^{\nu-2}\hat \rho^{-\beta}C^{k,a}_{\text{ie}}(\C\times\R^+; i\mathfrak{su}(E,H_0))
\end{equation}
is a smooth map depending smoothly on $t\in(0,1]$. Here \[\hat{y}:= \frac{y}{\sqrt{1+y^2}}.\] The continuity method amounts to showing that the operator  \eqref{continuity operator} has a solution $s$ for all $t\in[0,1]$. The fact that the operator $N_t(s)$ is not smooth between the above Banach spaces when $t = 0$ is not a problem as we show in the last part of this section. This failure of smoothness as $t\to 0$ happens precisely because due to the radial non-compactness of the domain, the deformation parameter $t$ changes the mapping properties of the linearized operator.  From this point onward we assume that $\beta = 3/2 +\epsilon$ for some small positive constant $\epsilon$. This is essential to prove that the linearization of $N_t(s)$ is invertible for $t\in(0,1]$. 

Let us first show that the set of $t\in[0,1]$ for which the operator $N_t(s)$ has a solution is non-empty. Define $s = \mathcal N(H_0)$. Clearly $s \in \psi^{\mu}R^{\nu}\hat \rho^{-\beta}C^{k+2,a}_{\text{ie}}(\C\times\R^+; i\mathfrak{su}(E,H_0))$ with $\beta = -4$ and therefore also with the required decay $3/2 + \epsilon$. Then it is straightforward to check that $N_1(-s) = 0$ and therefore the operator has a solution for  $t = 1$. 

Next we show that the set of points $t\in(0,1]$ is open. In order to do this we show that the linearization of the operator $N_t(s)$ is invertible between the appropriate spaces. The linearized operator is given by the formula 
\begin{equation}\label{linearization t}
\mathcal L_{t,s}(s') := \text{Ad}(e^{s/2})\mathcal L_H s' + ts'.
\end{equation}
\begin{proposition}
For $\epsilon >0$ chosen sufficiently small so that $1+\epsilon < -1/2 +\sqrt{\gamma_0 +1/4}$ and for $t\in(0,1]$, the linearized operator
\begin{equation}
\mathcal L_{t,s} : \psi^{\mu}R^{1+\epsilon}\hat \rho^{-(3/2+\epsilon)}C^{k+2,a}_{\text{ie}}(\C\times\R^+; i\mathfrak{su}(E,H_0)) \rightarrow \psi^{-\epsilon}R^{-1+\epsilon}\hat \rho^{-(3/2+\epsilon)}C^{k,a}_{\text{ie}}(\C\times\R^+; i\mathfrak{su}(E,H_0))
\end{equation}
is an isomorphism. 
\end{proposition}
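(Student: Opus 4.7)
The plan follows the template of the previous isomorphism theorem for $\mathcal{L}_H$, with two main adaptations: the zeroth-order term $t\,\text{Id}$ modifies the normal operator at $\rho\to\infty$ but leaves the indicial behaviour at boundary points unchanged, and injectivity becomes essentially automatic because of the strict positivity $t>0$. As a preliminary, since $s$ is bounded and self-adjoint with respect to $H_0$, the bundle isomorphism $\text{Ad}(e^{s/2})$ acts as a bounded automorphism of every weighted Hölder space in question, so invertibility of $\mathcal{L}_{t,s}$ is equivalent to invertibility of
\[
\widetilde{\mathcal{L}}_{t,s} := \mathcal{L}_H + t\,\text{Ad}(e^{-s/2}),
\]
a zeroth-order perturbation of $\mathcal{L}_H$.

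To prove Fredholmness of index zero, I would construct a parametrix by combining three local models with a partition of unity. Near zero-charge points and near positive-charge points the perturbation $t\,\text{Ad}(e^{-s/2})$ is strictly lower order than the $\rho^{-2}$-homogeneous part of $\mathcal{L}_H$, so the indicial roots are unchanged; the weight $R^{1+\epsilon}$ on the domain lies strictly between consecutive indicial roots thanks to the hypothesis $1+\epsilon < -1/2 + \sqrt{\gamma_0 + 1/4}$, and the standard 2-unit drop to $R^{-1+\epsilon}$ on the target continues to give a valid local parametrix exactly as for $\mathcal{L}_H$. At $\rho\to\infty$, by contrast, $t\,\text{Id}$ is of the same order as the leading term of $\mathcal{L}_H$ after the $\rho^{-2}$ rescaling, and the normal operator becomes
\[
N_\infty(\widetilde{\mathcal{L}}_{t,s}) = \pa_\rho^2 + \tfrac{1}{\rho}\pa_\rho + \tfrac{1}{\rho^2}\Bigl(\pa_\psi^2 + \cot\psi\,\pa_\psi - \tfrac{1}{\sin^2\psi}\nabla^\star_\theta\nabla_\theta\Bigr) + N_S + t\,\text{Id}.
\]
After spherical eigenmode decomposition, each radial mode is a modified-Bessel-type ODE whose decaying fundamental solution decays \emph{exponentially} in $\rho$, so the inhomogeneous problem at infinity can be solved in any polynomial weight. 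This is why the source and target share the same $\hat\rho^{-(3/2+\epsilon)}$ factor. Patching the three local parametrices yields a global parametrix with compact remainder, hence $\widetilde{\mathcal{L}}_{t,s}$ is Fredholm, and index zero follows from continuous deformation through $\sigma\,\text{Ad}(e^{-s/2})$ for $\sigma\in [0,t]$ combined with the index computation for $\mathcal{L}_H$ from the previous subsection.

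For injectivity, suppose $\widetilde{\mathcal{L}}_{t,s} s' = 0$ with $s'$ in the domain. The regularity theory of \cite{MW2} upgrades $s'$ to satisfy $|s'|\le C\psi R^{-1/2+\sqrt{\gamma_0+1/4}}$ near the boundary and exponential decay at infinity (since $t>0$), which is more than enough to justify the integration by parts coming from the identity \eqref{inner product identity}. Pairing with $s'$ in the $H_0$ inner product yields a sum of non-negative terms $\int |\nabla_i s'|_{H_0}^2$ and $\int |[\phi_1,s']|_{H_0}^2$ from $\mathcal{L}_H$, plus the contribution $t\int \langle \text{Ad}(e^{-s/2})s',\, s'\rangle_{H_0}$; since $\text{Ad}(e^{-s/2})$ is conjugation by a positive self-adjoint element, this last term is a pointwise positive definite quadratic form in $s'$, strictly positive unless $s'\equiv 0$. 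Hence $s'=0$, and Fredholm index zero then gives the isomorphism.

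The main obstacle I anticipate is the parametrix at infinity with the modified normal operator: although each spherical mode has an exponentially decaying fundamental solution, one must check that the radial inversion is uniformly bounded in the weighted Hölder norm across all modes, and that the error produced by the cutoff is genuinely compact despite the non-compactness of $\C\times\R^+$. The fact that $t>0$ pins down the exponential decay scale is what makes this estimate feasible; as $t\to 0$ the exponential scale diverges and these estimates degenerate, which is exactly the failure of smoothness at $t=0$ flagged in the paragraph preceding this proposition.
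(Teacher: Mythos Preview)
Your overall strategy matches the paper's (Fredholm from a parametrix patched over the three regimes, injectivity from an energy identity), but the injectivity step as written has a gap. You pair $\mathcal L_H s'$ with $s'$ in the $H_0$ inner product and assert this yields non-negative gradient terms; however, the identity \eqref{linear identity} expresses $\mathcal L_H$ as a sum of operators $(\mathcal D_i^{\dagger_H})^{\star}\mathcal D_i^{\dagger_H}$, and these are manifestly non-negative only when integrated against the $H$ inner product, not $H_0$. The paper sidesteps this by pairing not with $s'$ but with $\mathrm{Ad}(e^{s/2})s'$: since $\mathrm{Ad}(e^{s/2})$ carries $(i\mathfrak{su}(E,H),\langle\cdot,\cdot\rangle_H)$ isometrically to $(i\mathfrak{su}(E,H_0),\langle\cdot,\cdot\rangle_{H_0})$, this converts the $H_0$ pairing into an $H$ pairing, after which \eqref{linear identity} and integration by parts give $\sum c_i\int |\mathcal D_i^{\dagger_H}s'|_H^2\ge 0$. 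The $t$-term then becomes $t\int\langle s',\mathrm{Ad}(e^{s/2})s'\rangle_{H_0}=t\int|\mathrm{Ad}(e^{s/4})s'|_{H_0}^2$, whose finiteness is precisely what forces the domain weight $\hat\rho^{-(3/2+\epsilon)}$ so that $|s'|^2$ is integrable on $\R^3_+$; no appeal to exponential decay is needed. Your positivity claim for $\langle\mathrm{Ad}(e^{-s/2})s',s'\rangle_{H_0}$ is correct, but you have not controlled the $\mathcal L_H$ contribution, and the $\tfrac12[\mathcal M(H),\cdot]$ term in \eqref{linearized operator} (which does not vanish here, since $\mathcal M(H)=-t\,\mathrm{Ad}(e^{-s/2})s$) would also need to be handled.

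For the parametrix at infinity, your Bessel-mode approach would work in principle but is more elaborate than what the paper does. The paper observes that, since $s$ decays, the leading part of $\mathcal L_{t,s}$ at infinity is the \emph{scalar} operator $\Delta-2/y^2-t$, and proves its invertibility on the relevant weighted spaces by a direct maximum-principle argument: conjugate by $\hat y^{\mu}$, exhaust $\R^2\times(0,\infty)$ by compact sets with Dirichlet data, and use the strict negativity of the resulting zeroth-order coefficient (this is where $t>0$ enters) to obtain a uniform $L^\infty$ bound on the approximating solutions. This is more elementary and entirely avoids the uniform-in-mode estimate you flag as the main obstacle.
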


To prove the proposition we will need the fact that we can re-write the linearized operator \eqref{linearization} as 
\begin{equation}\label{linear identity}
\mathcal L_H = \frac{1}{2}((\mathcal D_1^{\dagger_H})^{\star}\mathcal D_1^{\dagger_H} +(\mathcal D_2^{\dagger_H})^{\star}\mathcal D_2^{\dagger_H}) + (\mathcal D_3^{\dagger_H})^{\star}\mathcal D_3^{\dagger_H}.
\end{equation}
This can be proved using the K\"ahler identities $i[\Lambda, \mathcal D_i] = (\mathcal D_i^{\dagger_H})^{\star}$ and $i[\Lambda, \mathcal D_i^{\dagger_H}] = - (\mathcal D_i)^{\star}$.
\begin{proof}

It follows from the proof of proposition $6.4.$ in \cite{MH2} that the range of the operator $\mathcal L_{t,s}$ is dense. Therefore, surjectivity follows from the existence of a parametrix. Injectivity follows from the fact that its nullspace vanishes when the domain space is as above. In order to prove the latter, we consider $s'\in \psi^{2-\epsilon}R^{1+\epsilon}\hat \rho^{-(3/2+\epsilon)}C^{k+2,a}_{\text{ie}}(\C\times\R^+; i\mathfrak{su}(E,H_0))$ we compute 
\begin{equation}
\begin{split}
0 &= \int_{\R_3^+} \langle \mathcal L_{t,s} s' , \text{Ad}(e^{s/2})s'\rangle_{H_0}\\
&= \int_{\R_3^+} \langle \text{Ad}(e^{s/2})\mathcal L_H s' , \text{Ad}(e^{s/2})s'\rangle_{H_0} + t \langle s' , \text{Ad}(e^{s/2})s'\rangle_{H_0}\\
&= \int_{\R_3^+} \frac{1}{2}(|\mathcal D_1^{\dagger_H} s'|_{H}^2 + |\mathcal D_2^{\dagger_H} s'|_{H}^2) + |\mathcal D_3^{\dagger_H} s'|_{H}^2 + t |\text{Ad}(e^{s/4})s'|_{H_0}^2
\end{split}
\end{equation}
where in order to go from the second to the third line we have used the mapping property of the endomorphism $\text{Ad}(e^{s/2})$, the identity \eqref{linear identity} and we have integrated by parts. The integration by parts is justified by the choice of weights and the integral 
\begin{equation}
\int_{\R_3^+} |\text{Ad}(e^{s/4}) s'|_{H_0}^2
\end{equation}
is finite precisely because $s'$ decays at least as fast as $\rho^{-1 - \epsilon}$. This forces $s' = 0$ and therefore we are done.

In order to prove the existence of a parametrix, we only need to invert the leading part of the linearized operator in the region $\rho\to\infty$ since otherwise the parametrix is essentially the same as the one constructed in \cite{MH2}. Since the section $s$ is decaying as $\rho\to\infty$ we see that in the nilpotent case, the leading part of the linearized operator in the required region is given by the scalar operator 
\begin{equation*}
\mathcal L_N := \Delta -\frac{2}{y^2} - t.
\end{equation*}
We want to show that this operator is invertible as a map between 
\begin{equation*}
\mathcal L_N : \hat{y}^{\mu}\hat\rho^{-\beta}C^{k+2,a}_0 \rightarrow \hat{y}^{\mu-2}\hat\rho^{-\beta}C^{k,a}_0.
\end{equation*}
It is enough to show this for $\beta = 0$ since showing this for different $\beta$ follows from conjugating the operator with $\hat\rho^{\beta}$ which produces a compact error and then showing that the kernel of the new operator is still empty. The claim is equivalent to showing that 
\begin{equation*}
\mathcal L_N^1 := \hat y^{2-\mu}\circ \mathcal L_n \circ \hat y^{\mu}: C^{k+2,a}_0 \rightarrow C^{k,a}_0
\end{equation*}
is invertible. This will follow from the interior and boundary estimates of \cite{MH2} once we can show that given $g\in C^{k,a}_0$ we can find a solution to 
\begin{equation*}
\mathcal L_N^1 f = g
\end{equation*}
such that $\sup|f| < \infty$. Consider an exhaustion $\{K_n\}_{n\in\N}$ of $\R^2\times(0,\infty)$ by compact subsets with smooth boundary and solve $\mathcal L_N^1 f_n = g$ with Dirichlet boundary conditions on $\pa K_n$. Then a subsequence of the $f_n$ will converge to a solution $f$ on all of $\R^2\times\R_+$. We show that each $f_n$ is bounded from above and below by constant functions. We have that 
\begin{equation*}
\mathcal L_N^1 = \hat y^2\Delta + \frac{\mu \hat y^2}{y(y^2+1)}\pa_y + \frac{\mu(\mu -3y^2-1)}{y^2(y^2+1)^2}\hat y^2 - 2\frac{\hat y^2}{y^2} - t\hat y^2.
\end{equation*}
In particular, since $\mu\in(-1,2)$ the constant term of this operator is less than a constant negative number $-b$ everywhere on $\R^2\times\R_+$. Then, since $|g| \le M$ we know that 
\begin{equation*}
\begin{split}
\mathcal L_N^1 (f_n - M/b) &\ge 0 \\
\mathcal L_N^1 (f_n + M/b) &\le 0.
\end{split}
\end{equation*}
By the weak maximum principle and the fact that $f_n$ vanishes at the boundary $\pa K_n$, it follows that $|f_n| \le M/b$ for all $n$. Therefore we are done.

\end{proof}

\subsection{A priori estimates}\label{estimates}
We begin with a lemma. 
\begin{lemma}\label{main lemma}
Consider the problem on $\R^2\times \R_+$
\begin{equation}
\begin{cases}
\Delta u -tu &= f \\
u|_{y = 0} &= 0,
\end{cases}
\end{equation}
where $f$ is a smooth function which decays like $\rho^{-(4+2\beta)}$ near infinity, where $\beta$ is a positive constant and $t\in[0,1]$. When $t = 0$, the problem has a smooth solution which vanishes linearly and uniformly as $y\to 0$ and which decays quadratically and uniformly as $\rho$ goes to infinity. When $t > 0$, the problem has a smooth solution which vanishes linearly and uniformly as $y\to 0$ and such that there exists a constant only depending on $f$ and $t$ satisfying
\begin{equation}
|u| \le C_{f,t}\hat\rho^{-(2+2\beta')}
\end{equation}
for all $\beta' <\beta$.
\end{lemma}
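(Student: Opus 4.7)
Plan. The strategy is to construct the solution via the same exhaustion method used in the proof of the previous proposition, and then establish the decay estimates by comparing with explicit barrier functions (for $t > 0$) or by directly analyzing a half-space Green's function (for $t = 0$). The two cases are handled separately because coercivity of the zeroth-order term $-tu$ is available only for $t > 0$.

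For $t > 0$, take a smooth exhaustion $\{K_n\}$ of $\R^2 \times \R_+$ and solve $(\Delta - t)u_n = f$ on $K_n$ with Dirichlet boundary conditions. The key input is the pointwise barrier $v(x) := C_{f,t}(1 + |x|^2)^{-(1+\beta')}$ for a $\beta' < \beta$. A short computation shows that, as $|x| \to \infty$, the coercive term $-tv$ dominates both the Laplacian of $v$ and the source $|f|$, because $2 + 2\beta' < 4+2\beta$; combined with choosing $C_{f,t}$ large enough to absorb $f$ on bounded sets, one obtains $(\Delta - t)v \le -|f|$ throughout $\R^2\times\R_+$. The weak maximum principle then yields $|u_n| \le v$ on $K_n$, and interior Schauder estimates together with Arzelà--Ascoli produce a smooth limit $u$ satisfying the desired pointwise bound.

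For $t = 0$ no such barrier exists (a decaying supersolution to the pure Laplacian would have to be subharmonic, not superharmonic), so I would instead write $u$ explicitly using the half-space Green's function obtained by odd reflection in $y$:
\begin{equation*}
u(x) = -\int_{\R^2 \times \R_+} G_0(x, z) f(z)\, dz, \qquad G_0(x, z) = \frac{1}{4\pi|x-z|} - \frac{1}{4\pi|x-z^*|},
\end{equation*}
where $z^*$ is the reflection of $z$ across $\{y=0\}$. The algebraic identity $G_0 = y_x\,s / (\pi |x-z||x-z^*|(|x-z|+|x-z^*|))$ (with $y_x$ and $s$ the $y$-components of $x$ and $z$) makes the linear vanishing at $\{y = 0\}$ manifest. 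For the quadratic decay at infinity, split the integral at $|z| = |x|/2$: on the inner region the dipole expansion $G_0 = y_x s/(2\pi|x|^3) + O(|z|^2/|x|^3)$ combined with $y_x \le |x|$ yields an $|x|^{-2}$ contribution, while on the outer region the decay of $f$ (strictly faster than $|z|^{-4}$) gives the required control after the rescaling $z = |x| w$.

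Smoothness up to $\{y = 0\}$ and uniform control of $\partial_y u|_{y=0}$ in $(x_1, x_2)$ follow from standard boundary Schauder estimates, so Taylor's theorem gives the uniform linear vanishing $u = y\,\partial_y u|_{y=0} + O(y^2)$. The principal obstacle is the $t = 0$ case, where the loss of coercivity forces the switch to the explicit Green's function and the careful multipole cancellation needed to produce the sharp $\rho^{-2}$ decay rather than only $\rho^{-1}$; a related subtlety is that the constant $C_{f,t}$ in the $t > 0$ estimate necessarily blows up as $t \to 0^+$, reflecting the genuine drop in the decay exponent from $2+2\beta'$ down to $2$ in the limit.
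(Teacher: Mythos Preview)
Your $t=0$ argument is essentially the paper's: both use the odd-reflection Green's function and the algebraic identity that exhibits the explicit factor $y\,y'$ in the numerator of $G_0$. The paper splits the convolution over $B_{R/2}(x)$ and its complement rather than over $\{|z|<|x|/2\}$, but the two decompositions lead to the same $\rho^{-2}$ bound.

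For $t>0$ your route diverges from the paper's. The paper again works with the explicit kernel
\[
G_t(x,x')=\frac{e^{-\sqrt t\,|x-x'|}}{4\pi|x-x'|}-\frac{e^{-\sqrt t\,|x-\bar x'|}}{4\pi|x-\bar x'|}
\]
and estimates it pointwise: on the near region $G_t\le G_0$, while on the far region one uses the elementary comparison $G_t\le C_\beta t^{-\beta}\bigl(|x-x'|^{-1-2\beta}-|x-\bar x'|^{-1-2\beta}\bigr)$, after which the factor $yy'$ is again extracted algebraically. This produces the explicit dependence $C_{f,t}\sim t^{-\beta}$, which is exactly what is fed into the iteration of the next proposition.

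Your barrier argument, by contrast, has a gap as written. For $v=C(1+|x|^2)^{-(1+\beta')}$ one computes that $\Delta v>0$ on an annulus, with
\[
\max_{|x|>0}\frac{\Delta v}{v}=\frac{(1+\beta')(1+2\beta')^2}{4(2+\beta')}=:\kappa(\beta'),
\]
which is about $1/8$ for small $\beta'$. Thus for every $t<\kappa(\beta')$ there is a point where $(\Delta-t)v>0$; since $\Delta v$ and $tv$ both scale linearly in $C$, enlarging $C$ cannot cure this, and the claimed inequality $(\Delta-t)v\le-|f|$ fails on that annulus. The repair is routine---first obtain the crude bound $|u|\le t^{-1}\sup|f|$ from the maximum principle, then run your barrier only on $\{|x|>R_0(t)\}$ with $R_0$ chosen so that $\Delta v<(t/2)v$ there, matching $v$ to the crude bound on the inner sphere---but note that the $t$-dependence you then get is coarser than the paper's $t^{-\beta}$.
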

\begin{proof}

This problem has an explicit solution of the form 
\begin{equation}\label{solution formula}
u = \int_{\R^3_+} f(x')G_t(x,x') \,dx',
\end{equation}
where $G_t(x,x')$ is Green's function for the operator $\Delta -t$ on the half space, given by the explicit formula 
\begin{equation}
G_t(x,x') := \frac{e^{-\sqrt{t}|x - x'|}}{4\pi|x - x'|} - \frac{e^{-\sqrt{t}|x-\bar x'|}}{4\pi|x - \bar x'|},
\end{equation}
with $x' = (x_2' , x_3' , y')$ and $\bar x' = (x_2' , x_3' , -y')$. 

Let us treat the case $t = 0$ first. Re-write $G_0(x,x')$ as
\begin{equation}
G(x,x') = \frac{yy'}{\pi|x - x'||x - \bar x'|(|x - x'| + |x - \bar x'|)},
\end{equation}
so that
\begin{equation}
u = y\int_{\R^3_+} \frac{f(x')y'}{\pi|x - x'||x - \bar x'|(|x - x'| + |x - \bar x'|)} \,dx'. 
\end{equation}
Our aim is to bound this integral. We split the integral into two regions which we treat separately. If $|x| = R$ then the first region will be $B_{R/2}(x)\cap \R_3^+$ and the second region will be its complement. In the first region, the condition on $f$ implies that $f(x') \le CR^{-(4+2\beta)}$ and it also holds that
\begin{equation}
\frac{y'}{|x - x'| + |x - \bar x'|} \le1. 
\end{equation}
Finally, the integral 
\begin{equation}
\int_{B_{R/2}(x)\cap \R_3^+} \frac{1}{\pi|x - x'||x - \bar x'|} \,dx'
\end{equation}
can be bounded above by $CR\log R$ where $C$ does not depend on $x$. Therefore, in the first region it holds that 
\begin{equation}
|u| \le CyR^{-(3+2\beta')}
\end{equation}
for any $\beta'<\beta$. We cannot have equality because of the $\log$ term in the bound, which cannot be avoided. In the second region each of the three factors in the denominator is bounded below by $R/2$ and therefore 
\begin{equation}
\begin{split}
|u| &\le C\frac{y}{R^3}\int_{\R_3^+} |f(x')y'|\,dx' \\
&\le C\frac{y}{R^3},
\end{split}
\end{equation}
since $|f(x')y'| = \mathcal O (\hat\rho^{-(3+2\beta)})$ and thus integrable. 

Now consider the case $t>0$. We will split the integral in equation \eqref{solution formula} into the same two regions as we did in the $t = 0$ case. In the first region we use the inequality 
\begin{equation}
\begin{split}
0 \le G_t(x,x') &=  \frac{e^{-\sqrt{t}|x - x'|}}{4\pi|x - x'|} - \frac{e^{-\sqrt{t}|x-\bar x'|}}{4\pi|x - \bar x'|}\\
&\le \frac{1}{4\pi|x - x'|} - \frac{1}{4\pi|x - \bar x'|}.
\end{split}
\end{equation}
One can derive this inequality as follows. Consider the function $f(x) = a^{-1}e^{-ax} - b^{-1}e^{-bx}$ where $0<a<b$. Its derivative is equal to $e^{-bx} - e^{-ax} \le 0$ and therefore its maximal value is at $x = 0$ and its infimum when $x\to \infty$. Therefore, the same bound holds in this region as in the $t = 0$ case and the constant in fact does not depend on $t$. 

In the second region, we use the inequality 
\begin{equation}
\begin{split}
0 \le G_t(x,x') &=  \frac{e^{-\sqrt{t}|x - x'|}}{4\pi|x - x'|} - \frac{e^{-\sqrt{t}|x-\bar x'|}}{4\pi|x - \bar x'|}\\
&\le \frac{C_{\beta}}{t^\beta}\left(\frac{1}{|x - x'|^{1+2\beta}} - \frac{1}{|x - \bar x'|^{1+2\beta}}\right).
\end{split}
\end{equation}
Again, to prove this, consider the function $f(x) = C_{\beta}(a^{-1}(ax)^{-2\beta} - b^{-1}(bx)^{-2\beta})  -  (a^{-1}e^{-ax} - b^{-1}e^{-bx})$. We may assume that $1<a<b$ since we only care about $|x| = R$ going to infinity. Under this assumption it is easy to check that one can choose $C_{\beta}$ so that the derivative of this function is negative everywhere and $f(0) = 0$ therefore we are done. 

Now, if $2N$ is the smallest even positive integer greater than $1+2\beta$ then it is easy to check that 
\begin{equation}
\begin{split}
\frac{1}{|x - x'|^{1+2\beta}} - \frac{1}{|x - \bar x'|^{1+2\beta}} &\le \frac{|x - x'|^{2N} - |x - \bar x'|^{2N}}{|x - x'|^{2N}|x - \bar x'|^{1 + 2\beta}}\\
& = \frac{4yy'\sum\limits_{i+j\le 2N -1}|x - x'|^i|x - \bar x'|^j}{|x - x'|^{2N}|x - \bar x'|^{1 + 2\beta}}.
\end{split}
\end{equation}
Finally, as we did in the $t = 0$ case, the $y'$ can be absorbed by $f$ and therefore the same argument will give a decay rate of $\rho^{-(2 +2\beta)}$. Since $\beta' < \beta$ we are done. 
\end{proof}
\begin{proposition}
If $s$ is a Hermitian endomorphism which satisfies $N_t(s) = 0$ for $t\in[0,1]$ then there exists $C$ only depending on $H_0$ such that 
\begin{equation}
\sup |\hat \rho s| \le C. 
\end{equation}
If we only restrict to $t\in(0,1]$ then there exists $C$ only depending on $H_0$ such that 
\begin{equation}
\sup |\hat \rho ^{3/2 + \epsilon} s| \le Ct^{-(3/2+ \epsilon)}.
\end{equation}
\end{proposition}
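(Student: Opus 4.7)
The plan is to derive a scalar sub-solution inequality for $|s|_{H_0}$ from $N_t(s) = 0$ and the Bochner-type identity \eqref{inner product identity}, and then use explicit super-solutions produced by Lemma \ref{main lemma} together with a weak maximum principle on a compact exhaustion of $\mathbb{R}^2 \times \mathbb{R}^+$.

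First I would observe that, since $s$ is a Hermitian endomorphism, it commutes with $e^{s/2}$, so $\mathrm{Ad}(e^{s/2})s = s$. The equation $N_t(s) = 0$ therefore rearranges to $\mathcal{M}(H) = -ts$. Substituting this into \eqref{inner product identity} and discarding the non-negative gradient term yields the pointwise inequality
\[
\Delta\,|s|_{H_0}^{2} + t\,|s|_{H_0}^{2} \;\le\; |\mathcal{M}(H_0)|_{H_0}\,|s|_{H_0},
\]
where $\Delta = \nabla^{\ast}\nabla$. A Young inequality with parameter $t$ converts this to $\Delta|s|_{H_0}^{2} + (t/2)|s|_{H_0}^{2} \le |\mathcal{M}(H_0)|_{H_0}^{2}/(2t)$, which is exactly of the form handled by Lemma \ref{main lemma}.

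For the $t$-uniform estimate $\sup|\hat\rho\,s| \le C$, I would feed the source $f = |\mathcal{M}(H_0)|_{H_0}$ into Lemma \ref{main lemma} with $t = 0$. By the construction of $H_0$ in Section 3 and the iterative improvement carried out in the preceding subsection, $|\mathcal{M}(H_0)|_{H_0}$ decays like $O(\hat\rho^{-4})$ at infinity and $O(y^N)$ for every $N$ near $y = 0$, so the lemma produces a non-negative super-solution $U$ with $U = O(\hat\rho^{-1})$ and $U|_{y=0} = 0$. Because $U \ge 0$, adding $(t/2)U$ on the left only strengthens the super-solution property, so $U$ remains a comparison function for every $t \in [0,1]$. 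The weak maximum principle on a smoothly-bounded compact exhaustion, combined with the vanishing of $|s|_{H_0}$ at $y = 0$ (forced by the matching of the Nahm-pole asymptotics of $H = H_0 e^s$ and $H_0$) and the decay of $|s|_{H_0}$ at radial infinity (encoded in the function space for $s$), then yields $|s|_{H_0} \le C\,U \le C'/\hat\rho$ uniformly in $t \in [0,1]$. For the sharper, $t$-dependent estimate I would instead apply Lemma \ref{main lemma} to $f = |\mathcal{M}(H_0)|_{H_0}^{2}/(2t) = O(\hat\rho^{-8}/t)$, corresponding to the lemma's parameter $\beta = 2$. Choosing $\beta' = 1/2 + \epsilon$ and tracking the $t$-dependence of the constant $C_{f,t}$ through the factor $t^{-\beta'}$ in the Green's function bound $G_t \le (C_\beta/t^\beta)(|x-x'|^{-1-2\beta} - |x-\bar x'|^{-1-2\beta})$ of the lemma, combined with the $1/t$ from the Young inequality, produces a super-solution $U_t$ whose comparison with $|s|_{H_0}^{2}$ via the maximum principle and subsequent extraction of a square root gives the claimed bound $\sup|\hat\rho^{3/2+\epsilon}\,s| \le C\,t^{-(3/2+\epsilon)}$.

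The main obstacle is the borderline nature of the first estimate: $|\mathcal{M}(H_0)|_{H_0}$ decays at exactly the critical rate $\hat\rho^{-4}$ (the $\beta = 0$ boundary of Lemma \ref{main lemma}), so extracting a clean $\hat\rho^{-1}$ super-solution for $|s|$ will require either absorbing a logarithmic factor, performing one more correction to obtain $\hat\rho^{-4-\delta}$ decay of $\mathcal{M}(H_0)$ at infinity, or running a direct ad hoc barrier argument in this region. A secondary subtlety is the careful combination of the Young inequality weight $1/t$ with the Green's function weight $t^{-\beta}$ needed to match the $t$-exponent $t^{-(3/2+\epsilon)}$ in the second estimate; both contributions must be tracked simultaneously through the proof of Lemma \ref{main lemma}.
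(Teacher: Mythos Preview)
Your derivation of the scalar inequality $\Delta|s|^2 + t|s|^2 \le |\mathcal M(H_0)|\,|s|$ from \eqref{inner product identity} is correct and is exactly how the paper begins. The difficulty is what to do with the factor $|s|$ on the right-hand side, and this is where your treatment of the first (uniform in $t$) estimate has a genuine gap.

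You propose to ``feed the source $f=|\mathcal M(H_0)|$ into Lemma~\ref{main lemma}'' and compare $|s|$ with the resulting $U$. But the source in the subsolution inequality is $|\mathcal M(H_0)|\,|s|$, not $|\mathcal M(H_0)|$; there is no direct way to strip off the $|s|$ without either a Young inequality (which costs a factor $1/t$ and so cannot give a $t$-uniform bound) or an a priori pointwise bound on $|s|$. If one simply inserts an assumed bound $|s|\le C_s\hat\rho^{-1}$, the comparison with a solution $U$ of $\Delta U = -\hat\rho^{-5}$ yields $|s|^2\le M C_s\,U$, hence $|s|\le (MC_s)^{1/2}\hat\rho^{-1}$: the resulting constant still depends on $C_s$, i.e.\ on $s$ itself, which is exactly what the proposition forbids. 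The paper's remedy is to \emph{iterate} this step, observing that the map $C_s\mapsto (MC_s)^{1/2}$ contracts to the fixed point $M$, so that after finitely many passes the bound becomes $|s|\le (M+1)\hat\rho^{-1}$ with $M$ depending only on $H_0$. Your write-up does not contain this bootstrap, and the single comparison you describe cannot produce an $s$-independent constant. Incidentally, once the factor $|s|\le C_s\hat\rho^{-1}$ is included the source decays like $\hat\rho^{-5}$, comfortably inside the range of Lemma~\ref{main lemma}; the ``borderline $\hat\rho^{-4}$'' obstacle you flag is an artifact of having dropped that factor. (Note also that the lemma at $t=0$ gives $U=\mathcal O(\hat\rho^{-2})$, not $\mathcal O(\hat\rho^{-1})$; the correct comparison is $|s|^2\le C\,U$, not $|s|\le C\,U$.)

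For the second estimate your Young-inequality route is a legitimate alternative to the paper's method. The paper again runs the iteration $C_{s,t}\mapsto (MC_{s,t})^{1/2}t^{-\beta}$ (with $\beta=3/4+\epsilon/2$), whose fixed point is $Mt^{-2\beta}=Mt^{-(3/2+\epsilon)}$. Your approach replaces this by a single application of Lemma~\ref{main lemma} to the $t$-dependent source $|\mathcal M(H_0)|^2/(2t)$; tracking the $t^{-\beta'}$ in the Green's-function bound together with the explicit $1/t$ indeed reproduces (in fact slightly improves) the stated exponent after taking a square root. So for the $t>0$ part the two arguments are genuinely different but both work; the iteration has the advantage of treating the two cases uniformly, while your Young step is a one-shot argument but fails at $t=0$, which is why the first estimate still needs the bootstrap.
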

\begin{proof}
Using equation \eqref{inner product identity} we obtain the identity 
\begin{equation}
-\Delta|s|^2 + |\sqrt{\gamma(-s)}\nabla s|^2 + t|s|^2 + \langle\Omega_{H_0} , s\rangle = 0.
\end{equation}
Let us first prove the assertion for $t\in[0,1]$. We initially assume that $|s| \le C_{s}\hat\rho^{-1}$ for some constant depending on $s$. Then we get the differential inequality 
\begin{equation}
-\Delta|s|^2 \le -\langle\Omega_{H_0} , s\rangle \le MC_{s}\hat\rho^{-5},
\end{equation}
where $M$ only depends on $H_0$. From lemma \eqref{main lemma}, we can find $u$ vanishing linearly in $y$ and quadratically as $\rho \to \infty$ which solves $-\Delta u = - \hat\rho^{-5}$. Therefore we get that $\Delta(|s|^2 - MC_{s}u) \ge 0$. By the weak maximum principle and the fact that both $s$ and $u$ vanish at all boundaries and near infinity that $|s| \le (MC_{s})^{1/2}\hat\rho^{-1}$. Therefore we have improved the constant from $C_{s}$ to $(MC_{s})^{1/2}$. Repeating this process enough times, it follows that in fact $|s| \le (M +1)\hat\rho^{-1}$.

Now, restricting $t\in(0,1]$, we initially assume that $|s| \le C_{s,t}\hat\rho^{-(3/2+\epsilon)}$. Then, we have the differential inequality 
\begin{equation}
-(\Delta - t)|s|^2 \le MC_{s,t}\hat\rho^{-(11/2 + \epsilon)}.
\end{equation}
Using lemma \eqref{main lemma} we can find $u$ solving $(\Delta - t)u = -\hat\rho^{-(11/2 + \epsilon)}$, vanishing linearly in $y$ and since $\beta = 3/4 + \epsilon/2$ here, decay like $\hat\rho^{-(2 + 3/2)}$. Therefore, 

\begin{equation}
(\Delta - t)(|s|^2 - MC_{s,t}t^{-\beta}u) \ge 0.
\end{equation}
By the weak maximum principle and the boundary behavior of $s$ and $u$ we obtain the inequality $|s| \le (MC_{s,t})^{1/2}t^{-\beta} \hat\rho^{-7/4} \le (MC_{s,t})^{1/2}t^{-\beta} \hat\rho^{-(3/2 + \epsilon)}$. Therefore we obtain the bound $|s| \le (M+1)t^{-2\beta}\hat\rho^{-(3/2 + \epsilon)}$.
\end{proof}
\begin{theorem}\label{main}
Suppose $N_t(s) = 0$, where $H_0$ is the approximate metric constructed above. Then the following estimates hold with $C$ a universal constant, for all non-negative integers $k$, $0<a<1$ and $\epsilon$ appropriately small.
\begin{itemize}
\item For $t\in(0,1]$ , $\|s\|_{\mathcal A_{k,a,1/2 + \epsilon}} \le Ct^{-(3/2 + \epsilon)}$
\item For $t\in[0,1]$ , $\|s\|_{\mathcal A_{k,a,0}} \le C$
\end{itemize}
where 
\begin{equation}
\mathcal A_{k,a,\beta}:= \psi^{2-\epsilon}R^{1+\epsilon}\hat \rho^{-(1 + \beta)}C^{k,a}_{\text{ie}}(\C\times\R^+; i\mathfrak{su}(E,H_0))
\end{equation}
\end{theorem}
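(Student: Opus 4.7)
The plan is to upgrade the pointwise bounds just established into weighted edge H\"older estimates by applying the Schauder theory for the linearization $\mathcal L_H$ developed in \cite{MW1,MW2,MH2}, combined with the decay of $\mathcal M(H_0)$ in all three regions. Using the identity \eqref{linearization}, the equation $N_t(s) = 0$ may be rewritten as
\begin{equation*}
\text{Ad}(e^{s/2})\bigl(\gamma(-s)\mathcal L_{H_0}s\bigr) + ts = -\text{Ad}(e^{s/2})\mathcal M(H_0) - \text{Ad}(e^{s/2})Q(s).
\end{equation*}
Given the previous proposition we already know that $|s|$ is controlled by $\hat\rho^{-1}$ (resp.\ $t^{-(3/2+\epsilon)}\hat\rho^{-(3/2+\epsilon)}$), so the coefficient $\gamma(-s)$ and the adjoint twist $\text{Ad}(e^{s/2})$ are bounded perturbations of the identity, and $Q(s)$ is a quadratically small nonlinearity in $s$. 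Thus $s$ satisfies an elliptic equation with leading part $\mathcal L_{H_0}+t$, with a source term dominated by $\mathcal M(H_0)$, which by the Borel-sum construction from Section~4.3 can be arranged to decay as $\mathcal O(y^N)$ near the boundary and as $\mathcal O(\hat\rho^{-4})$ at infinity.

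The first step is to establish the boundary behavior $\psi^{2-\epsilon}R^{1+\epsilon}$. By the indicial analysis recalled above, the admissible decay rates for solutions of the homogeneous normal operator are $\{-1,2\}$ at a zero-charge boundary point and $-1/2\pm\sqrt{\gamma_0+1/4}$ near positive charge and near $\rho\to\infty$. Since $\mathcal M(H_0)$ vanishes to all orders in $y$ and the source decay is faster than what is dictated by the indicial roots, the standard parametrix construction in the edge calculus forces $s$ to vanish at the rate prescribed by the smallest positive indicial root that lies above the source's rate. Thus near a zero-charge boundary point $s$ picks up the factor $\psi^{2-\epsilon}$, and near a positive-charge point or radial infinity it picks up $R^{1+\epsilon}$. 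Feeding this improved boundary behavior back into the equation and iterating gives the required weighted $C^0$ control.

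The second step is to upgrade to H\"older and higher derivative control, for which we invoke the scale-invariant interior and boundary Schauder estimates of \cite{MW2} and \cite{MH2}. Working in the edge H\"older spaces $C^{k,a}_{\text{ie}}$ and applying the estimates on rescaled unit half-balls around each boundary point and around each point of $\mathbb{C}\times\R^+$ at distance $\rho$ from the origin, one obtains bounds on $(\rho\partial_\rho)^i\partial_\theta^p\partial_\psi^q s$ of the correct order, where the $L^\infty$ input is bounded by the previous proposition and the source term contribution is controlled by its decay rate. Bootstrapping the regularity using the nonlinear equation and the fact that the coefficients of $\mathcal L_{H_0}$ and the nonlinearity $Q$ are smooth up to the boundary in the edge sense then produces the full $C^{k,a}_{\text{ie}}$ bound for any $k$.

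The main obstacle is the behavior as $t\to 0$. For $t>0$ the lower-order term $+ts$ contributes an exponential decay in the radial direction (through the Green's function of $\Delta-t$), which is what makes the $\hat\rho^{-(3/2+\epsilon)}$ decay in the first estimate possible; but the resulting constant in the Schauder estimate degenerates like $t^{-(3/2+\epsilon)}$, matching the $L^\infty$ bound established in the previous proposition. For $t=0$, one loses the $+ts$ term entirely, and must rely only on the $\hat\rho^{-2}$ decay in the source; combined with the $L^\infty$ bound $|s|\le C\hat\rho^{-1}$, this gives the weaker second estimate with weight $\hat\rho^{-1}$ rather than $\hat\rho^{-(3/2+\epsilon)}$. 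The delicate point is to ensure that the constants in the edge Schauder estimates are uniform in $t$ on the appropriate scales; this is handled by the rescaling argument of \cite{MH2}, applied separately in the regions $\hat\rho\lesssim t^{-1/2}$ and $\hat\rho\gtrsim t^{-1/2}$, where the $t$-term is respectively negligible and dominant.
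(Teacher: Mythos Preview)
Your proposal is correct and aligns with what the paper does: the paper itself omits the proof entirely, stating in a remark that it is identical to Section~8 of \cite{MH2}, with the interior and radial estimates coming from Bando--Siu \cite{bando} and Hildebrandt \cite{hildebrandt}, and the singular boundary estimates handled by the scaling arguments of \cite{MH2}. Your outline---bootstrap from the pointwise bounds of the preceding proposition, use the indicial analysis to pin down the boundary weights, then apply scale-invariant edge Schauder estimates---is exactly the architecture of that argument.

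One small difference in emphasis is worth noting. The paper highlights that the interior $C^{1}$ step ``depends on the particular form of the operator $N_t(s)$'': in the Bando--Siu scheme one exploits the moment-map identity \eqref{inner product identity} (and its cousins) directly to control $|\nabla s|$ from $|s|$, rather than treating the equation as a generic quasilinear perturbation of $\mathcal L_{H_0}$. Your write-up leans on the latter viewpoint, absorbing $Q(s)$ as a first-order nonlinearity and invoking Schauder. Both are valid; the structural approach is a little cleaner because it avoids having to separately control the first-order terms in $Q(s)$ before the bootstrap can start, whereas your route needs that initial derivative bound (which you implicitly assume but do not isolate). Either way the scaling/rescaling argument near the boundary and the $t$-dependent analysis at radial infinity are as in \cite{MH2}, and your description of the $t\to 0$ degeneration is accurate.
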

\begin{remark}
We will not give the proof since it is identical to the main theorem proved in section $8$ of \cite{MH2}. The interior estimates as well as the estimates as $\rho \to \infty$ go back to the paper of Bando and Siu \cite{bando} and Hildebrandt \cite{hildebrandt} and depend on the particular form of the operator $N_t(s)$. The boundary estimates sensitive to the singular nature of the operator were proved in \cite{MH2} by adapting the above methods appropriately and making use of the scaling properties of the operator $N_t(s)$. 
\end{remark}

\subsection{Existence}

Finally, we prove existence. If we have a sequence of solutions $s_j$ of $N_{t_j}(s_j) = 0$ such that $\{t_j\}$ is decreasing and converging to $t >0$ then the first inequality in theorem \eqref{main} implies that these sections are uniformly bounded in $\mathcal A_{k,a,\epsilon}$ and therefore there exists $0 < \epsilon ' < \epsilon$ and $0 <a' <a$ such that a subsequence of this sequence converges in $\mathcal A_{k,a',\epsilon'}$. Now, the limit $s_t := \lim s_{t_j}$ solves the equation $N_t(s) = 0$ and the regularity theory for the Extended Bogomolny Equations with knots as developed in \cite{MW2} implies that $s$ is in fact in  $\mathcal A_{k,a,\epsilon}$. Combining this with the linear analysis in the previous section we have shown that the equation $N_t(s) = 0$ has a solution in  $\mathcal A_{k,a,\epsilon}$ for all $t\in (0,1]$. Finally, the second a priori estimate in theorem \eqref{main} implies that if $s_j$ is a sequence of solutions to $N_{t_j}(s_j) = 0$ with $t_j \to 0$ then this sequence is uniformly bounded in $\mathcal A_{k,a,0}$ and therefore has a convergent subsequence in $\mathcal A_{k,a',-\epsilon}$. This subsequence converges to $s$ which solves the equation $N_0{s} = 0$ which is what we wanted. From this point onward the regularity theory of \cite{MW2} can be used to describe the exact asymptotics of $s$ and therefore of the solution metric. 

\section{Uniqueness}

Following \cite{MH2} we prove uniqueness of solutions using the convexity of the Donaldson functional. For any two metrics $K$ and $H$ such that $H = Ke^s$ such that $\text{Tr}(s) = 0$, write 
\begin{equation*}
\mathcal M_{K,H}:= \frac{i}{2} \Lambda\left( [\mathcal D_1 , \mathcal D_1^{\dagger_H}] + [\mathcal D_2 , \mathcal D_2^{\dagger_H}]\right) + [\mathcal D_3 , \mathcal D_3^{\dagger_H}],
\end{equation*}
where the conjugate $\mathcal D_i^{\dagger}$ is taken with respect to $H$. The subscript $K$ emphasises that when $H = K$ then we are considering $\mathcal M = 0$ as an equation for $s$. The analog of the Donaldson functional for the extended Bogomolny equations takes the form 
\begin{equation}
\mathcal F(H,K):= \int_0^1\int_{\R^2\times\R^+} \langle s, \mathcal M_{Ke^{us}, K}\rangle_K \,dz\wedge\,dy\wedge\,du.
\end{equation}
Writing $H_t = Ke^{ts}$ we have the formulas 
\begin{equation}\label{derivatives of Donaldson functional}
\begin{split}
\frac{d}{dt}\mathcal F(H_t,K) &= \int_{\R^2\times\R^+} \text{Tr}(\mathcal M_{H_t,K}s)\,dz\wedge\,dy \\
\frac{d^2}{dt^2} \mathcal F(H_t,K) &= \sum\limits_{i=1}^3 \int_{\R^2\times\R^+} |\mathcal D_i s|^2 + \int_{\R^2\times\R^+} \pa_{\bar z}\text{Tr}(\mathcal D_1^{\dagger}s\wedge s) + \int_{\R^2\times\R^+} \pa_y\text{Tr}(\mathcal D_3^{\dagger}s\wedge s).
\end{split}
\end{equation}
Using these expressions we show that the map from holomorphic data to solutions to the extended Bogomolny equations is injective. 
\begin{theorem}
Given a triple $(P(z), Q(z), R(z))$ of polynomials satisfying the degree condition, suppose that $H,K$ are two solutions to the extended Bogomolny equations corresponding to this holomorphic data. Then $H = K$. 
\end{theorem}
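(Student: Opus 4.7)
The plan is to exploit the convexity of the Donaldson functional $\mathcal{F}(\cdot, K)$ along the geodesic $H_t := K e^{ts}$ connecting $K$ to $H = K e^s$, where $s$ is the trace-free Hermitian endomorphism relating the two solutions. By the first identity in \eqref{derivatives of Donaldson functional} and $\mathcal{M}(H) = \mathcal{M}(K) = 0$, the function $\phi(t) := \mathcal{F}(H_t, K)$ has $\phi'(0) = \phi'(1) = 0$. If $\phi'' \geq 0$ on $[0, 1]$ with equality only when $\mathcal{D}_i s = 0$ for $i = 1, 2, 3$, then $\phi'$ is nondecreasing with equal endpoint values, hence $\phi' \equiv 0$ and $\phi'' \equiv 0$, giving $\mathcal{D}_i s \equiv 0$; combined with the decay of $s$ this will force $s \equiv 0$.

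First I would establish the regularity and decay of $s$ needed for every integration by parts below to be legitimate. Since $H$ and $K$ both correspond to the same triple $(P, Q, R)$, they share the same generalized Nahm pole asymptotics at $y = 0$ (including near each knot point) and both approach the same charge-$N$ model as $\rho \to \infty$. Applying the a priori estimates of Subsection \ref{estimates} to each of $H$ and $K$, together with the indicial root analysis of Section~$4$ and the regularity theory of \cite{MW2}, gives $|s| = O(\hat{\rho}^{-1})$ at infinity and $|s| = O(\psi^{2-\epsilon} R^{-1/2 + \sqrt{\gamma_0 + 1/4}})$ near the boundary and knot points, with matching bounds on $\mathcal{D}_i s$. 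Since $\gamma_0 > 2$, so that $-\tfrac{1}{2} + \sqrt{\gamma_0 + 1/4} > 1$, these rates make $\phi(t)$ and both derivatives in \eqref{derivatives of Donaldson functional} absolutely convergent.

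The main obstacle is positivity of $\phi''$. While $\sum_{i=1}^{3} \int |\mathcal{D}_i s|^2$ in \eqref{derivatives of Donaldson functional} is manifestly nonnegative, the two remaining terms are total derivatives in $\bar{z}$ and $y$; applying Stokes turns them into boundary integrals that must be shown to vanish at $y = 0$ on the Nahm pole locus, near each knot point, along $|z| \to \infty$, and as $\rho \to \infty$ with $\psi$ bounded below. The delicate case is the Nahm pole locus: the operators $\mathcal{D}_i^{\dagger_{H_t}}$ inherit inverse-power singularities from the model solution, and these must be dominated by the vanishing of $s$ using precisely the bound $-\tfrac{1}{2} + \sqrt{\gamma_0 + 1/4} > 1$ together with the weight improvements of Theorem~\ref{main}. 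Once these boundary integrals are shown to vanish, convexity holds and the first paragraph gives $\mathcal{D}_i s \equiv 0$; to conclude I would invoke \eqref{inner product identity} specialized to $H_0 = K$: since $\mathcal{M}(H) = \mathcal{M}(K) = 0$,
\begin{equation*}
\Delta |s|_K^2 + \tfrac{1}{2} \sum_i |v(s) \nabla_i s|_K^2 = 0,
\end{equation*}
so integrating over $\R^2 \times \R^+$ and using the same decay estimates to kill boundary contributions forces $v(s) \nabla_i s \equiv 0$; invertibility of $v(s) = \sqrt{\gamma(-s)}$ gives $\nabla_i s \equiv 0$, so $|s|_K$ is constant, and the decay at infinity forces $s \equiv 0$, i.e.\ $H = K$.
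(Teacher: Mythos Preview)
Your proposal is correct and follows essentially the same strategy as the paper: Donaldson functional convexity along the geodesic $H_t=Ke^{ts}$, with the indicial/decay estimates used to kill the boundary terms in the second-derivative formula so that $\phi''\ge 0$, and then the endpoint conditions $\phi'(0)=\phi'(1)=0$ (coming from $\mathcal M(K)=\mathcal M(H)=0$) force $\phi''\equiv 0$, hence $\mathcal D_i s=0$.

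One remark on your closing step. Once you have $\mathcal D_1 s=\mathcal D_3 s=0$, the paper finishes more directly: in parallel holomorphic gauge these read $\bar\partial s=0$ and $\partial_y s=0$, so $s$ is holomorphic in $z$ and independent of $y$, and the decay at infinity forces $s\equiv 0$. Your detour through \eqref{inner product identity} is valid but redundant in two senses: it does not use the information $\mathcal D_i s=0$ that you just obtained, and in fact it is a self-contained uniqueness argument on its own (take $H_0=K$ in \eqref{inner product identity}, use $\mathcal M(H)=\mathcal M(K)=0$, integrate, and invoke the same boundary-term vanishing). So you have effectively written two proofs; either the Donaldson-functional route with the holomorphicity endgame, or the inner-product-identity route alone, suffices.
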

\begin{proof}
Write $H = Ke^s$ so that $H_t = Ke^{ts}$. The indicial computations for $\mathcal L$ imply that the order of vanishing of $s$ is strictly greater than $1$ both on the boundary and at radial infinity. Therefore, the boundary terms in \eqref{derivatives of Donaldson functional} vanish. The condition $\mathcal D_1 s = \mathcal D_3 s = 0$ forces $s$ to be holomorphic, which together with its vanishing behavior forces it to be zero. Therefore, $\frac{d^2}{dt^2} \mathcal F(H_t,K) > 0$ if $s\neq 0$. However, since $\mathcal F(H_0,K) = \mathcal F(H,K) = \frac{d}{dt}\mathcal F(H_0,K) = 0$, we obtain that $\mathcal F(H_t,K) =0$ and therefore $H = K$. 
\end{proof}

\begin{remark}\label{Taubes' solutions}
Having this uniqueness result, we can identify Taubes' instanton solutions with the holomorphic data $(P(z), Q(z), R(z)) = (z^k, a_0+...+a_{p-1}z^{p-1}, z^p)$ and therefore recover the $\C^{p-1}\times\C^{\star}$ moduli space of gauge inequivalent instanton model solutions that he described. 
\end{remark}

\begin{remark}
Contingent on generalizing theorem \ref{diagonal metric solution} to the $SU(n+1)$ case, the methods of this paper can be used in a straightforward way to construct multi-instanton solutions for the extended Bogomolny equations with group $SU(n+1)$ for $n>1$ since model solutions in this case have been constructed in \cite{Mikhaylov}.
\end{remark}

\bibliography{bibliography}
\bibliographystyle{alpha}

\end{document}